\documentclass[sigconf,nonacm]{acmart}
\setcopyright{none}

\usepackage{amsmath}
\usepackage{booktabs}
\usepackage{float}
\usepackage{graphicx}
\usepackage[nameinlink,noabbrev]{cleveref}
\newtheorem{definition}{Definition}
\newtheorem{theorem}{Theorem}
\newtheorem{corollary}{Corollary}
\newtheorem{proposition}{Proposition}
\newtheorem{lemma}{Lemma}
\newcommand{\Adv}{\mathsf{Adv}}
\newcommand{\LRp}{LR+}\newcommand{\Dep}{\mathsf{Dep}}\newcommand{\Sem}{\mathsf{Sem}}

\AtBeginDocument{%
  
}

\title{Beyond Object Authentication: Context-Closed Post-Quantum Authentication for the WebPKI}

\author{Anis Bkakria}
\affiliation{%
  \institution{IRT SystemX}
  \city{Palaiseau}
  \country{France}%
}

\acmConference[Preprint]{arXiv preprint}{August 2026}{}
\acmYear{2026}

\ccsdesc[500]{Security and privacy~Public key infrastructure}
\ccsdesc[500]{Security and privacy~Authentication}
\ccsdesc[300]{Security and privacy~Security protocols}
\ccsdesc[300]{Security and privacy~Digital signatures}

\keywords{post-quantum authentication, WebPKI, certificate compression, Merkle trees, context closure}

\begin{document}
\sloppy
\raggedbottom
\begin{abstract}
Post-quantum migration increases WebPKI authentication cost, but authenticating a compressed certificate object does not by itself preserve the mutable authorization context under which a relying party accepts it.  We formalize \emph{context closure}: the authenticated projection accepted by a verifier must determine the selected authorization semantics it claims, relative to declared source contracts and event-coverage witnesses.  We instantiate this idea with \LRp, a two-plane post-quantum construction that authenticates mutable CA-context state in an update plane while the warm path carries only state-local dependency references selected by explicit profile negotiation.

In a pinned CCADB reconstruction, we obtain 44,912 path/view contexts and 16,858 physical CA lineages across Apple, Chrome, Microsoft, and Mozilla views.  The core compiler yields $m_{50}=6$, $m_{95}=16$, and $m_{\max}=18$ typed dependencies.  A warm LR+ selector therefore costs 296, 776, and 872 bytes at median, p95, and maximum, compared with 3,842, 5,932, and 6,350 bytes for a one-signature stateless bundle carrying the same dependency vector.  The retained all-view closure state is 16.15 MB, and per-view lifecycle crossovers range from 19.60 to 50.41 median-path warm authentications/day under the stated checkpoint and update model.  The implementation and evaluation artifact are available at \url{https://github.com/nserser/LR-WebPKI}.
\end{abstract}
\maketitle

\section{Introduction}
\label{sec:introduction}

Post-quantum migration turns certificate authentication into a bandwidth problem.  Standardized signatures such as ML-DSA are intended to withstand quantum adversaries, but their public keys and signatures are much larger than the classical signatures around which today's WebPKI was engineered~\cite{fips204}.  A natural response is to stop paying the full authentication cost on every connection.  Certificates can be represented relative to authenticated tree state, signatures can be amortized across many objects, and relying parties can retain authenticated state between connections.  Merkle Tree Certificates (MTCs) and Merkle Tree Ladder (MTL) mode are examples of this broader direction~\cite{mtc05,mtl09}.  Our concern is not whether such compression works.  It is what the compressed representation still has to authenticate.

The problem appears when the same authentic object participates in several certification contexts.  Suppose a CA key is cross-signed under two parents and a lower certificate remains byte-for-byte unchanged.  A relying party selects one lineage.  Later, an authority fact attached to that lineage---a status generation, a parent authorization, a root-program state, or a platform distrust rule---changes independently of the certificate.  The certificate and its local inclusion evidence can still verify perfectly, even though the selected context should now be rejected.  Replacing the authenticator on the unchanged object with ML-DSA, a larger Merkle commitment, an aggregate proof, or a succinct argument cannot help if the changing authority fact is absent from the authenticated statement.

We call the missing condition \emph{context closure}.  An accepted transcript, together with the relying party's retained monotone state, should determine every independently mutable fact needed to evaluate the selected path.  The first theorem isolates the omission principle: if two admissible worlds give the authentication compiler and verifier the same complete information but require different selected decisions, stronger cryptography over that information cannot close the gap.  The second theorem characterizes the authority evidence needed to avoid it, under an explicit witness taxonomy.  A decision-relevant event must be reflected by current authenticated state ($\mathsf F$), excluded by authenticated complete coverage ($\mathsf C$), or deliberately neutralized by an authenticated bounded grant ($\mathsf L$).  Each source supplies its own safe horizon, and only the horizons of dependencies selected by the current context contribute to that context's deadline.

LR+ resolves three design constraints that are easy to conflate.  First, it separates full end-entity validation from retained CA-context terminals.  A full selected validation instance is
\[
P_{\rm full}=(C_{\rm EE},X_0,X_1,\ldots,R,V),
\]
whereas LR+ retains the CA authorization lineage
\[
\bar P=(X_0,X_1,\ldots,R,V).
\]
The end-entity certificate remains in the baseline evidence; retained terminal topology scales with CA authorization contexts, not the live population of TLS leaf certificates.  Second, the warm path does not carry or match a global state identifier.  Global state identifiers remain on the update/checkpoint plane, while warm origin authentication is a state-local selector interpreted against the verifier's installed authenticated state.  Since different clients may use different trust-store views and selected CA lineages, LR+ makes selector negotiation explicit: clients can either advertise a coarse closure-profile identifier or receive a small catalog of profile-indexed selectors.  Third, an untyped depth proxy is replaced by a semantically justified compiler: for the evaluated core profile, LR+ emits root-profile and root-authorization dependencies plus selected-edge and selected-context status records for each selected CA transition.

The setting is not confined to a synthetic cross-signing example.  We pin the structural evaluation to the public CCADB archive release \texttt{v1.20260828.235636}.  The artifact downloads the full V5 metadata report, 33 yearly certificate-PEM reports for 1994--2026, and root-trust input; records are normalized into a multigraph and path/view rows are reconstructed from that rebuilt corpus.  The resulting reference graph contains 10,234 records/nodes, 24,653 edges, 365 roots, 228,249 graph paths, and 17,051 active TLS paths before per-view expansion.  Under four configured views, the compiler sees 44,912 path/view contexts representing 16,858 distinct physical CA-certificate sequences.  These figures do not claim that every multi-parent record is vulnerable, nor that every browser path can be reconstructed by CCADB alone.  They show that public WebPKI data contains substantial structure in which local certificate identity and selected authorization context are different things.

LR+ is our concrete realization of context closure.  It separates a relatively infrequent authority-update plane from the warm-authentication path used on repeated connections.  Mutable authority records, including source-specific horizons, are committed separately from stable CA-context topology.  A publisher authenticates canonical state transitions with ML-DSA-44.  Clients install consecutive deltas, signed range catch-up deltas, or authenticated monotone checkpoints.  Warm connections carry stable dependency references into state the client has already authenticated; the verifier derives the retained CA-context terminal locally from the selected lineage and those references.  The third theorem reduces an incorrect accepted context to update-signature forgery, hash/commitment binding failure, ordinary baseline-validation failure, adaptive selected-source failure, or trusted-time failure.

The executable profile gives byte-exact closure-layer overheads.  Over the pinned CCADB reconstruction, the typed compiler yields global $m_{50}=6$, $m_{95}=16$, and $m_{\max}=18$.  With the state-local warm wire codec, LR+ adds 296 bytes of warm selected-context closure at the median, 776 bytes at p95, and 872 bytes at the maximum.  In profile-hint mode, the transmitted warm object is only $(\tau,\rho_1,\ldots,\rho_m)$: the terminal identity and check value are re-derived by the verifier from the exact selected CA lineage and the transmitted references, then checked against the current compiler vector committed in topology.  A strong stateless Signed Path Bundle (SPB), carrying the same typed dependency vector and one ML-DSA-44 signature, costs 3,842, 5,932, and 6,350 bytes at the same corpus points.  The retained all-view reference state is 16,150,580 bytes; the signed full checkpoint is 16,160,724 bytes.  At 30-day checkpoint amortization and 16 changed typed records/day, the median-path crossover against SPB is 19.60 warm authentications/day for Chrome, 37.98 for Mozilla, 47.02 for Apple, and 50.41 for Microsoft.

\paragraph{Contributions.}
We make four contributions.
\begin{itemize}
  \item \textbf{Context closure and event coverage.} We formalize the gap between authenticating an object and preserving the authorization semantics of a selected context.  We prove the projection-omission separation and characterize decision-relevant mutable events, under an explicit witness taxonomy, through reflected state ($\mathsf F$), authenticated complete coverage ($\mathsf C$), or bounded deliberate irrelevance ($\mathsf L$).
  \item \textbf{LR+, a post-quantum realization.} LR+ combines CA-context terminals, path-local horizons, stable topology, authenticated delta/checkpoint recovery, and state-local warm selectors.  Its security argument separates primitive assumptions from ordinary validation, source semantics, and trusted time.
  \item \textbf{Typed WebPKI dependency compilation.} We replace an untyped depth placeholder by an executable CA-context compiler that emits root-profile, root-authorization, selected-edge, and selected-context status dependencies, each with typed provenance, source contract, horizon rule, evidence digest, and correlation well-formedness metadata.  We also report a policy-augmented sensitivity profile to show how richer declared semantics break the deterministic depth-to-count relation.
  \item \textbf{Reproducible evaluation and lifecycle model.} We run the pipeline from a pinned CCADB archive release, regenerate raw path/view traces, report global and per-view communication/state results, and model bootstrap/update/warm lifecycle cost relative to a strong stateless baseline.
\end{itemize}

Section~\ref{sec:wrong-context} starts from the two-world failure.  Section~\ref{sec:context-closure} develops the semantic model and the full-path/CA-context split.  We then present LR+, prove its selected-context security, examine the WebPKI reconstruction and source contracts, evaluate the executable profile, and discuss deployment limits and related work.

\section{Authenticated Object, Wrong Context}
\label{sec:wrong-context}

Certificate compression changes where authentication evidence is carried.  A conventional path sends several signatures and status-bearing objects with the connection; a compressed design can authenticate many objects against shared tree state, cached landmarks, or a separately distributed authenticated frontier.  MTCs, for example, authenticate CA-issued entries through issuance logs and allow landmark-relative certificates when the relying party already holds sufficiently recent trusted landmark state~\cite{mtc05}.  This is attractive for post-quantum bandwidth, but it exposes a boundary: an authentic certificate, log entry, or local tree state need not imply that the \emph{selected certification context} containing it remains authorized.

\begin{figure}[t]
  \centering
  \includegraphics[width=\columnwidth]{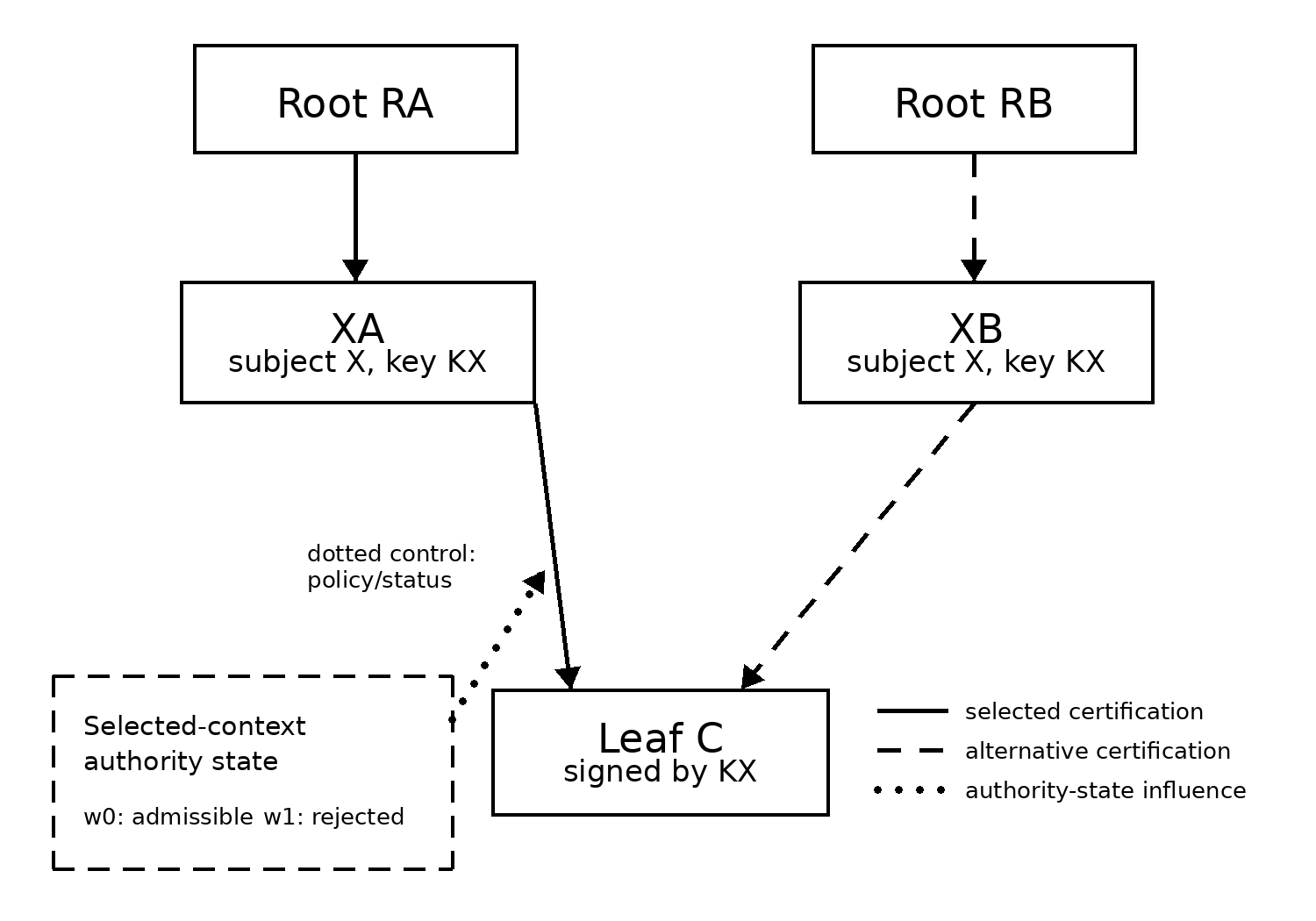}
  \caption{Two certification contexts authenticate the same CA subject and key $K_X$, and therefore the same lower object $C$. Solid arrows mark the selected certification path through $X_A$; dashed arrows mark the alternative path through $X_B$; the dotted control arrow marks mutable policy/status authority state rather than another certification edge. The selected-context authority state is admissible in world $w_0$ and no longer admissible in world $w_1$, even though the certificate objects remain unchanged.}
  \Description{Schematic with two roots and two cross-signed certificates for the same subject and public key converging on the same lower certificate. Solid arrows show the selected left-hand path and dashed arrows show an alternative path. A dashed authority-state box labels world w0 as admissible and world w1 as not admissible, with an arrow to the selected context.}
  \label{fig:wrong-context}
\end{figure}

\paragraph{A concrete two-world execution.}
Consider \Cref{fig:wrong-context}.  Certificates $X_A$ and $X_B$ bind the same CA subject and public key $K_X$ under different parent contexts.  A lower certificate $C$ signed by $K_X$ is therefore compatible with both lineages.  A full TLS validation instance may include an end-entity certificate below $C$; for the semantic point here, $C$ denotes the lowest object whose issuer context is selected by the retained CA lineage.  Suppose the relying party selects the left context.

At time $t_0$, the selected $R_A\!\rightarrow X_A$ context is admissible.  The relying party obtains a compressed transcript containing authentic evidence for the certificate objects and whatever local issuance-tree state the underlying compressor requires.  It may even cache that evidence exactly as intended by the compression scheme.  At time $t_1$, the bytes of $C$ have not changed; neither has its signature under $K_X$.  What changes is an independently mutable authority fact attached to the selected $R_A\!\rightarrow X_A$ context.  In world $w_0$ the selected context remains admissible; in world $w_1$ it is no longer admissible.  Any transcript that authenticates only the unchanged objects collapses the two worlds.

The example is not an attack on cross-signing itself.  Cross-signing is useful and common.  The point is that selected authorization is a relation among object authenticity, parent context, root-program state, status and distrust state, and time.  A compressed representation that authenticates only the object component can be cryptographically strong and still semantically incomplete.

\paragraph{CA-context terminals.}
The design uses this example in a leaf-independent way.  Let
\[
P_{\rm full}=(C_{\rm EE},X_0,X_1,\ldots,R,V)
\]
be the full validation instance under trust-store view $V$.  The ordinary baseline evidence $B_P$ authenticates the end-entity object, its selected issuer $X_0$, and certificate-local validity decisions.  LR+ retains a terminal for the CA authorization lineage
\[
\bar P=(X_0,X_1,\ldots,R,V),
\]
not for every end-entity certificate.  This is the right scale for the CCADB evidence used in the evaluation: CCADB reconstructs CA root/intermediate structure, while live leaf issuance remains in the baseline or in a separate leaf-status mechanism.

\section{Context-Closed Authentication}
\label{sec:context-closure}

The failure in \Cref{sec:wrong-context} arises before any cryptographic primitive fails.  The verifier can hold authentic evidence and still be missing information needed to decide whether the \emph{selected} context is authorized.  We therefore first ask what an accepted transcript must determine about the world; only later do we choose how to bind that information cryptographically.

\subsection{Full paths, CA contexts, and lifting}

The LR+ profile separates end-entity validation from retained CA-context closure.  A full selected validation instance is
\[
P_{\rm full}=(C_{\rm EE},X_0,X_1,\ldots,R,V),
\]
where $C_{\rm EE}$ is the end-entity object, $X_0$ is the selected issuing CA, $R$ is the root, and $V$ is the trust-store view.  The retained CA-context lineage is
\[
\bar P=(X_0,X_1,\ldots,R,V).
\]
The baseline evidence $B_P$ includes the certificate objects and signatures used by ordinary validation, certificate-local extensions such as Basic Constraints, Key Usage, Name Constraints, certificate policies, SAN/name bindings, and validity intervals, plus any leaf-local status evidence assigned to the baseline.  LR+ closes the mutable CA authorization context in which that baseline object is accepted.

\begin{proposition}[Full-path lifting]
\label{prop:lifting}
For the declared CA-context profile, suppose path building and baseline static validation return
\[
  \mathsf{BaseValidate}(P_{\rm full})=(\mathsf{accept},\bar P,\mathsf{leafState}),
\]
where $\bar P$ is the exact selected CA-context lineage, not merely the issuing certificate $X_0$.  If LR+ context-closes this retained lineage, then baseline validation and LR+ together determine the selected full-path authorization semantics for every mutable decision assigned to either interface.
\end{proposition}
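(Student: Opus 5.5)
The argument is a decomposition: partition the mutable facts the selected full-path decision depends on, then show each part is fixed by one of the two interfaces. Write the selected full-path authorization semantics as a predicate $\Sem(P_{\rm full},w,t)$ over a world $w$ and time $t$. The declared CA-context profile fixes the set of mutable decisions it relies on, $\Dep(P_{\rm full})$. I would first establish that this set splits as a disjoint union
\[
\Dep(P_{\rm full})=\Dep_{\rm base}(C_{\rm EE},X_0)\;\cup\;\Dep_{\rm ctx}(\bar P).
\]
Here $\Dep_{\rm base}$ collects certificate-local and leaf-local decisions: signatures, Basic Constraints, Key Usage, Name Constraints, policies, SAN binding, validity intervals, and leaf status. $\Dep_{\rm ctx}$ collects the mutable CA authorization facts: root profile, root authorization, selected-edge status, and selected-context status along $\bar P$. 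The phrase ``assigned to either interface'' means the profile supplies exactly this assignment. So the first step is to state as an explicit profile hypothesis that every mutable decision is assigned to at least one side, and to note that doubly assigned decisions only need to agree.

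Second, I would use the structure of the hypothesis $\mathsf{BaseValidate}(P_{\rm full})=(\mathsf{accept},\bar P,\mathsf{leafState})$. Acceptance fixes the truth values of all $\Dep_{\rm base}$ decisions, with $\mathsf{leafState}$ carrying the leaf-status component. It also outputs the \emph{exact} lineage $\bar P$, so the context that LR+ closes is the one the baseline actually chained through. This is where the requirement that $\bar P$ is not merely $X_0$ is used. If only $X_0$ were returned, the two-world example of \Cref{fig:wrong-context} applies: two lineages through $X_A$ and $X_B$ share $X_0$ but carry different $\Dep_{\rm ctx}$. Closing one lineage would then not determine the semantics of the path actually validated.

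Third, I would unfold the definition of context closure applied to $\bar P$. The accepted LR+ transcript together with the retained monotone state determines every $\Dep_{\rm ctx}(\bar P)$ fact. Each such fact is determined in one of three ways: reflected in current authenticated state ($\mathsf F$), excluded by authenticated complete coverage ($\mathsf C$), or neutralized by a bounded grant ($\mathsf L$) within its horizon. With both parts fixed, I would conclude by a two-world argument. Take any two admissible worlds $w_0,w_1$ consistent with the same baseline output and the same closed LR+ view. They agree on every element of $\Dep(P_{\rm full})$. Since $\Sem$ depends on $w$ only through these decisions, which is a locality property of the declared profile, $\Sem(P_{\rm full},w_0,t)=\Sem(P_{\rm full},w_1,t)$.

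The main obstacle is the locality and coverage step. It must be justified that no mutable decision falls between the two interfaces, for example a parent-authorization fact that affects how $X_0$ may issue to $C_{\rm EE}$. It must also be justified that the full-path semantics factor through $\Dep$. I would handle this by making it part of the ``declared CA-context profile'' hypothesis: any cross-term, such as name constraints inherited from an upper CA, must be either certificate-local and hence static in $B_P$, or a typed dependency emitted for $\bar P$. The proposition is therefore relative to that completeness of assignment, and I would state this explicitly rather than attempt to prove it from WebPKI semantics.
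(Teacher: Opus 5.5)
Your proposal is correct and follows essentially the paper's route. The paper also uses a two-world argument: equal baseline evidence fixes the end-entity object, the exact lineage $\bar P$, and the leaf-local decisions; equal installed LR+ state fixes the CA-context dependency values; and the declared profile's factorization through exactly these two interfaces gives equal semantics, with the $X_0$-only cross-signing case showing why the exact lineage is needed. Your detour through F/C/L in the third step is unnecessary, since the paper simply uses equality of the installed state, but it is harmless.
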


\begin{proof}
Take two admissible worlds with equal baseline evidence and equal installed LR+ state for the same selected $\bar P$.  Equality of the baseline fixes the end-entity object, the selected concrete CA lineage output by path building, and all leaf-local decisions assigned to baseline validation.  Equality of the installed LR+ state fixes the effective root, selected-edge, status, policy, and grant values consumed by that lineage.  The declared profile factors selected full-path semantics through exactly these two interfaces.  Hence the selected full-path semantics agree.  The statement would be false if the baseline exposed only $X_0$: the same issuing key may occur under several parent/root contexts.
\end{proof}

\subsection{Worlds and sufficient context}

Fix a selected context $P$, a verification time $t$, and an admissible world $w$.  The world contains authenticated local histories, authority and policy/status events, selected certification context, and relying-party progress state relevant to validation.  Let $B_P(w)$ denote the evidence supplied by the underlying certificate-authentication or compression substrate, and let $\Sem_P(w,t)$ denote the selected semantics the new layer promises to preserve.

\begin{definition}[Context sufficiency]
A context value $X_P(w)$ is \emph{sufficient} for $(P,t)$ relative to $B_P$ if, for every pair of admissible worlds $w_0,w_1$,
\[
\begin{aligned}
  B_P(w_0)&=B_P(w_1),\\[-1mm]
  X_P(w_0)&=X_P(w_1)
\end{aligned}
\quad\Longrightarrow\quad
\Sem_P(w_0,t)=\Sem_P(w_1,t).
\]
A compressed-authentication protocol is \emph{context-closed} when every accepted transcript, together with the relying party's trusted monotone state, is computationally bound to such a sufficient context and to the temporal rule under which that context remains usable.
\end{definition}

The definition is representation-independent.  $X_P$ may be an explicit vector, a Merkle root, a vector commitment, or mostly retained client state.  What matters is the induced partition on admissible worlds: worlds requiring different selected decisions may not collapse to the same authenticated state.

\begin{theorem}[Context omission]
\label{thm:omission}
Let $\pi(w)$ be the complete world-dependent input consumed by an authentication compiler, and let $\rho(w)$ be the trusted relying-party state available before the candidate authentication is processed.  If there exist admissible worlds $w_0,w_1$ such that
\[
  \pi(w_0)=\pi(w_1),\qquad \rho(w_0)=\rho(w_1),
\]
but $\Sem_P(w_0,t)\ne\Sem_P(w_1,t)$, then no compiler whose output distribution depends on the world only through $\pi(w)$, together with a verifier whose additional trusted pre-state is $\rho(w)$, can provide context closure for $(P,t)$.
\end{theorem}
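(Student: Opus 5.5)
The argument is a standard indistinguishability (information-theoretic) one, carried out by contradiction. Suppose some compiler $\mathcal C$, whose output distribution depends on the world only through $\pi(w)$, together with a verifier $\mathcal V$ whose extra trusted pre-state is $\rho(w)$, provides context closure for $(P,t)$. Take the two admissible worlds $w_0,w_1$ from the hypothesis, with $\pi(w_0)=\pi(w_1)$ and $\rho(w_0)=\rho(w_1)$ but $\Sem_P(w_0,t)\ne\Sem_P(w_1,t)$.

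\textbf{Step 1: identical joint views.} Since $\pi(w_0)=\pi(w_1)$, the transcript distributions $\mathcal C(\pi(w_0))$ and $\mathcal C(\pi(w_1))$ coincide. Since $\rho(w_0)=\rho(w_1)$, the joint distribution of (transcript, relying-party pre-state) is the same in both worlds. Any derived quantity is therefore identically distributed in $w_0$ and $w_1$: the accept bit, the context value $X_P$ that the transcript is bound to, and the temporal rule. This includes derived quantities from a probabilistic or computational binding, since even a computationally binding commitment opens to the same value on identical inputs except with negligible probability.

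\textbf{Step 2: acceptance and binding.} Consider first the case where the verifier accepts with nonnegligible probability in $w_0$. It then accepts with the same probability in $w_1$, on identically distributed transcripts. If $\mathcal V$ never accepts in either world, then context closure for $(P,t)$ is vacuous for these worlds. I will treat this case by reading "provide context closure" as requiring acceptance on honest executions in admissible worlds. Otherwise it is a completeness failure, which I would note explicitly.

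On accepted transcripts, context closure says that the transcript and $\rho$ are bound to a context value that is sufficient relative to $B_P$. The value determined by the verifier's view is a function of that view, so it takes the same value $x$ in both worlds. The baseline evidence $B_P(w)$ is part of the compiler's world-dependent input, or else part of $\rho$, because $\pi$ is the \emph{complete} input. Hence $B_P(w_0)=B_P(w_1)$.

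\textbf{Step 3: contradiction.} By the definition of context sufficiency, equal $B_P$ and equal bound context force $\Sem_P(w_0,t)=\Sem_P(w_1,t)$. This contradicts the hypothesis.

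The main obstacle is Step 2. The binding must be to a context value that is the true $X_P(w_b)$ of each world, not merely some value the view determines. Closure requires $X_P(w_b)=x$ for each $b$, except with negligible probability. Otherwise an adversary opening to the bound value in one world would break binding. I would therefore make explicit the reading that "computationally bound" means the bound value equals the world's actual context value with overwhelming probability. Under that reading, $X_P(w_0)=x=X_P(w_1)$ holds with overwhelming probability, and the sufficiency implication applies deterministically. The remark that stronger cryptography cannot help then follows because nothing in the argument used the primitives beyond the view being identical.
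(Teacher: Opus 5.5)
Your proposal is correct and uses the same indistinguishability argument as the paper: equal $\pi$ gives identically distributed compiler outputs, and equal $\rho$ fixes the verifier's pre-state, so the complete view behind any accepted decision is the same in $w_0$ and $w_1$ even though the required semantics differ. Your Steps~2--3 spell out what the paper's short proof leaves implicit---that binding to a sufficient $X_P$, together with $B_P(w_0)=B_P(w_1)$ (visible through $\pi$ or $\rho$), would force $X_P(w_0)=X_P(w_1)$ and hence equal semantics, and that a non-vacuity reading (acceptance on honest executions) is needed to exclude a verifier that never accepts.
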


\begin{proof}
Equality of $\pi$ gives the compiler the same world-dependent input in both worlds, so its signatures, commitments, Merkle proofs, aggregate proofs, or succinct arguments have the same distribution.  Equality of $\rho$ also fixes the verifier's trusted information before that transcript is processed.  The complete information available to the accepted decision is therefore identically distributed, even though the required selected semantics differ.
\end{proof}

The theorem is intentionally modest: it is a separation lemma.  Its value is to make the next question unavoidable: which authority events must an accepted context determine, exclude, or neutralize?

\subsection{Dependency completeness and typed compilation}

For each mutable selected dependency $S_i$, let $h_i$ be its admissible authority history and let
\[
A_i(h_i,t)
\]
be the effective policy-relevant value contributed by $S_i$ at time $t$.  The value is evaluated after applying any valid grant whose semantics the policy recognizes.  Thus an $\mathsf L$ witness may allow an underlying authority fact to change while keeping the effective value unchanged until the grant expires; an early-revocation channel, when present, is modeled as another dependency.

\begin{definition}[Dependency completeness]
\label{def:dependency-completeness}
A declared dependency set $\Dep(P)=\{S_1,\ldots,S_m\}$ with abstractions $A_1,\ldots,A_m$ is complete for $P$ if, for all admissible worlds $w,w'$ and verification times $t$ in the supported domain,
\[
\begin{gathered}
 B_P(w)=B_P(w'),\\[-1mm]
 A_i(h_i(w),t)=A_i(h_i(w'),t)\quad\forall i
\end{gathered}
\Rightarrow
\Sem_P(w,t)=\Sem_P(w',t).
\]
\end{definition}

This is the explicit no-hidden-dependency condition.  It has three distinct roles in the paper.  First, it is the abstract condition needed by context closure.  Second, Section~\ref{sec:compiler} proves an exact compiler-completeness statement for a declared core CA-context semantics, denoted $\Sem_{\rm core}$.  Third, coverage of complete live browser behavior is a deployment claim and is not asserted here: if a browser or root program makes another mutable fact decision-relevant, a deployment must add another stream or include the fact inside an existing composite dependency before the theorem is applied to that richer profile.

\subsection{F/C/L as determine, exclude, neutralize}

An accepted descriptor $\delta_i$ induces the compatible-history set
\[
\Gamma_i(\delta_i,t)=\{h_i: h_i\text{ is admissible and consistent with }\delta_i\text{ at }t\}.
\]
It is semantically safe through $t$ when all compatible histories agree on the effective value consumed by policy:
\[
  \forall h,h'\in\Gamma_i(\delta_i,t),\quad A_i(h,t)=A_i(h',t).
\]
The labels $\mathsf F$, $\mathsf C$, and $\mathsf L$ are extensional ways of establishing this condition under the declared witness taxonomy.  They are not protocol names; they classify what the descriptor does to the compatible-history space.

\paragraph{$\mathsf F$: reflected state.}
Authenticated evidence determines the current effective branch.  A nonce-bound OCSP response can be such evidence at acquisition time, but a timestamp alone is not enough to guarantee forward freshness.

\paragraph{$\mathsf C$: excluded event by complete coverage.}
Authenticated coverage establishes that every event in a declared region is represented, so an unrepresented effective event cannot exist there.  A useful descriptor exports an effective-coverage watermark, not merely a publication cutoff.

\paragraph{$\mathsf L$: event neutralized by grant.}
An authenticated grant deliberately makes the underlying event irrelevant to the selected decision through an expiry.  If the grant can be revoked earlier, that revocation channel is itself a dependency.

These modes are semantic operators: determine, exclude, and neutralize.  A composite descriptor may use different witnesses for different event classes.  Its safe horizon is the minimum of the constituent horizons, and the path horizon is the minimum over dependencies selected by the path:
\[
D_{P,e}=\min_{S_i\in\Dep(P)}D_{i,e}.
\]

\begin{theorem}[Selected-path F/C/L event-coverage characterization]
\label{thm:fcl}
Fix $(P,t)$, assume dependency completeness, and assume a globally event-toggle-closed authority model for selected independently mutable dependencies.  Under the declared witness taxonomy, if a decision-relevant event class of a selected dependency is neither determined by a valid $\mathsf F$ witness, excluded by a valid $\mathsf C$ witness, nor neutralized by a valid $\mathsf L$ witness through $t$, the accepted context is not sufficient for $(P,t)$.  Conversely, if every event class that can affect selected dependency values is covered through $t$ by such a witness, the resulting descriptors are bound, and policy/time evaluation is deterministic and sound, then the accepted dependency vector is context-sufficient.
\end{theorem}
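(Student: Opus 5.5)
The plan is to prove the two directions separately. The necessity direction reduces to Theorem~\ref{thm:omission} through a two-world construction. The sufficiency direction chains semantic safety of each descriptor into Definition~\ref{def:dependency-completeness}.

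\emph{Necessity.} Suppose a decision-relevant event class $E$ of some selected dependency $S_i$ is not determined, excluded, or neutralized through $t$.

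\textbf{Step 1.} Because $E$ is decision-relevant, there is an admissible world $w_0$ and a toggle of an event in $E$ affecting the selected semantics. Global event-toggle closure of the authority model lets me form a second admissible world $w_1$ from $w_0$ by toggling one event of $E$ in $h_i$, and nothing else.

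\textbf{Step 2.} I show $w_0$ and $w_1$ produce the same accepted descriptors and the same baseline evidence, which is where the taxonomy is used.
\begin{itemize}
\item Since no valid $\mathsf F$ witness determines the current branch of $E$, the authenticated evidence is consistent with both branches.
\item Since no $\mathsf C$ witness covers the region containing the event, its absence or presence is not excluded.
\item Since no $\mathsf L$ grant neutralizes it, the toggle does not leave the effective value unchanged.
\end{itemize}
So $h_i(w_0)$ and $h_i(w_1)$ both lie in $\Gamma_i(\delta_i,t)$ while $A_i(h_i(w_0),t)\ne A_i(h_i(w_1),t)$. Baseline evidence, other dependencies, and relying-party pre-state are untouched by the toggle, because $S_i$ is independently mutable.

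\textbf{Step 3.} Decision relevance then gives $\Sem_P(w_0,t)\ne\Sem_P(w_1,t)$. Taking $\pi$ to be the accepted context plus baseline and $\rho$ the pre-state, Theorem~\ref{thm:omission}, or directly the definition of context sufficiency, shows the accepted context is not sufficient.

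\emph{Sufficiency.} Fix $w,w'$ with equal baseline evidence and equal bound accepted descriptors $\delta_1,\ldots,\delta_m$.

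\textbf{Step 4.} For each $S_i$, both $h_i(w)$ and $h_i(w')$ lie in $\Gamma_i(\delta_i,t)$. Binding ensures it is the same $\delta_i$ in both worlds.

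\textbf{Step 5.} I show $\Gamma_i(\delta_i,t)$ is semantically safe by case analysis on event classes. An event class either cannot affect $A_i$, or it is covered by one of the following:
\begin{itemize}
\item an $\mathsf F$ witness, which fixes its branch;
\item a $\mathsf C$ witness, which fixes its absence within the watermark;
\item an $\mathsf L$ witness, which fixes the effective value until expiry.
\end{itemize}
For a composite descriptor I take the minimum of the constituent horizons, so every witness is valid through $t$. To move from $h$ to $h'$, I proceed event class by event class, changing one class at a time within $\Gamma_i$, and show $A_i$ is preserved at each step. This yields $A_i(h_i(w),t)=A_i(h_i(w'),t)$ for all $i$.

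\textbf{Step 6.} Dependency completeness (Definition~\ref{def:dependency-completeness}) then gives $\Sem_P(w,t)=\Sem_P(w',t)$. Deterministic, sound policy and time evaluation ensures the verifier computes exactly this semantics with horizon $D_{P,e}$.

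\emph{Main obstacle.} The hard part is Step 2 and the hybrid in Step 5, that is, making the \enquote{declared witness taxonomy} precise enough. Necessity needs that failure of all three witness types really yields two compatible histories differing in effective value. If the taxonomy were not exhaustive, some other mechanism could pin $E$. The hybrid argument needs event classes to be separable, so that a single toggle stays admissible and changing one class does not interact with another.

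My first attempt is to state the taxonomy as an exhaustiveness assumption: any descriptor that makes $\Gamma_i$ constant on $E$ does so by determination, exclusion, or neutralization. I would also read \enquote{globally event-toggle-closed} as closure of admissible histories under single-class toggles that preserve consistency with every non-covering descriptor. With these two readings, both directions become short.
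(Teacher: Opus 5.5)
Your proposal is correct and follows essentially the same route as the paper: necessity via global toggle closure, the extensional reading of the $\mathsf F$/$\mathsf C$/$\mathsf L$ taxonomy, decision relevance and Theorem~\ref{thm:omission}; sufficiency via dependency completeness plus a per-witness case analysis. The only difference is that the paper argues sufficiency by contradiction (differing semantics force some selected effective value to differ, which the covering witness rules out) and reads ``covered'' extensionally as already forcing a single effective value, so your class-by-class hybrid in Step~5, and the separability assumption it would need, is not required.
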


\begin{proof}
For necessity, global toggle closure supplies admissible worlds with equal baseline evidence, selected context, relying-party state, and all other dependency histories, while the uncovered event branch differs.  Because the declared taxonomy is extensional, an uncovered event is precisely one whose compatible histories are not forced into the same effective value by determination, exclusion, or neutralization.  Decision relevance then makes the selected semantics differ, so Theorem~\ref{thm:omission} applies.  For sufficiency, suppose two accepted worlds with equal bound descriptors have different selected semantics.  Dependency completeness implies some selected effective value differs.  A valid $\mathsf F$ witness fixes that value, a valid $\mathsf C$ witness excludes the hidden alternative, and a valid $\mathsf L$ witness makes the underlying variation policy-irrelevant before expiry.  Each case contradicts the assumed effective-value difference.
\end{proof}

\begin{lemma}[Sufficient condition for global toggles]
\label{lem:toggle}
Suppose that, after fixing baseline evidence, verifier pre-state, and all non-selected dependency histories, each declared independent event class admits a conditionally rectangular set of locally admissible branches, and the profile does not split correlated authority facts across independent dependencies.  Then local event independence implies the global toggle condition required by Theorem~\ref{thm:fcl}.
\end{lemma}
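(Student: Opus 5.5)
I will first make the global toggle condition precise in the form used in the necessity half of Theorem~\ref{thm:fcl}.  Fix $(P,t)$, baseline evidence $b$, verifier pre-state $\rho$, and, for a selected dependency $S_i$ with an uncovered decision-relevant event class $E$, all histories $h_j$ for $j\neq i$.  Global toggle closure then asks for two admissible worlds $w_0,w_1$ with these fixed components whose $h_i$-branches differ only in whether $E$ occurs.  Both branches must be consistent with the accepted descriptor $\delta_i$, and they must differ in $A_i(\cdot,t)$.  The local hypothesis supplies the single-coordinate version of this: a pair of branches $h_i^0,h_i^1\in\Gamma_i(\delta_i,t)$ that toggle $E$ and are locally admissible relative to the fixed context.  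The task is to show that the combined tuples are globally admissible worlds.

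The plan has three steps.  (i) I write the set of admissible worlds, conditioned on $(b,\rho)$ and the non-selected histories, as a subset $\mathcal W$ of the product $\prod_k H_k$ of per-dependency branch sets.  By conditional rectangularity, $\mathcal W=\prod_k H_k^{\rm loc}$, where $H_k^{\rm loc}$ is the set of locally admissible branches for event class $k$.  (ii) Take any admissible world $w$ realizing the accepted transcript; one exists because the descriptor was accepted.  Replace its $i$-th coordinate by $h_i^0$ and then by $h_i^1$.  Rectangularity puts both tuples in $\mathcal W$, and every other coordinate is unchanged, so $B_P$, $\rho$, the selected context, and all other dependency histories agree, as Theorem~\ref{thm:fcl} requires.  (iii) Since $h_i^0,h_i^1\in\Gamma_i(\delta_i,t)$, both worlds induce the same bound descriptor $\delta_i$.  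They still differ in the effective value $A_i(\cdot,t)$ because the event class is uncovered, which is exactly the global toggle.

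The main obstacle is justifying that toggling $E$ inside $S_i$ leaves every other dependency's effective value unchanged.  This is where the non-splitting hypothesis enters.  If a correlated authority fact were split across $S_i$ and some $S_j$, toggling $E$ could force a change in $h_j$, or make the combined tuple inadmissible.  Either outcome breaks rectangularity at the level of dependencies even when each event class looks locally free.  To handle this, I will argue that, because no correlated fact is split, every admissibility constraint coupling event classes acts within a single dependency coordinate.  The conditional admissibility predicate therefore factors across dependencies, which is precisely what step (i) uses.  I will also add a brief remark that the lemma gives only a sufficient condition: when the profile does split correlated facts, rectangularity can fail, and the profile must first merge the correlated facts into one composite dependency before Theorem~\ref{thm:fcl} applies.
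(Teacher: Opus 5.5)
Your proposal is correct and follows the paper's route. Conditional rectangularity lets you substitute each locally admissible branch of the uncovered event class while baseline evidence, pre-state, and all other dependency histories stay fixed, and the non-splitting hypothesis keeps correlated facts inside one dependency coordinate, so the resulting tuple is never an impossible mixed history; your version simply spells out the product decomposition and the requirement that both branches lie in $\Gamma_i(\delta_i,t)$ with different $A_i(\cdot,t)$, which the paper's three-line proof leaves implicit.
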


\begin{proof}
Rectangularity supplies one global admissible world for each locally admissible branch while fixed components remain unchanged.  Correlated facts are grouped into the same effective dependency.  The construction therefore never asks for an impossible mixed history.
\end{proof}

\paragraph{Withholding versus rollback.}
A monotone epoch stops a client that has already advanced from reinstalling old authenticated state.  It does not make a suppressed successor visible.  LR+ handles withholding separately: an installed state remains usable only while the selected-path clock rule certifies $t\le D_{P,e}$; beyond that point, suppression becomes fail-closed unavailability.  Because the horizon is path-local, a short-lived dependency in an unrelated context does not expire $P$.

\section{LR+: Context-Closed PQ Authentication}
\label{sec:lrplus}

Context closure would be unattractive if it replaced certificate signatures with a large context proof on every connection.  LR+ avoids that outcome by separating two timescales.  Authority state changes on an update plane; repeated connections carry compact selectors into state the relying party has already authenticated.

The design relies on three separations.  First, retained terminals represent CA-context lineages $\bar P$, while end-entity evidence remains in $B_P$.  Second, global state identifiers authenticate update/checkpoint transitions but are absent from the warm origin message.  Third, dependency counts come from a typed compiler that assigns semantic records to profile, root, edge, and status decisions.

\subsection{Dependency state and stable topology}

At closure epoch $e$, the mutable dependency map $M_e$ contains typed records
\[
 m_{s,e}=(s,\mathsf{type},c_s,g_{s,e},v_{s,e},\mu_{s,e},D_{s,e},\eta_{s,e}).
\]
Here $s$ is a stable stream identifier; $\mathsf{type}$ identifies the semantic class; $c_s$ is the authority context; $g_{s,e}$ is a generation; $v_{s,e}$ is the effective validation value; $\mu_{s,e}$ is the source contract; $D_{s,e}$ is the horizon; and $\eta_{s,e}$ binds evidence metadata.  Let $R_e=\mathsf{Root}(M_e)$.

Canonical identities are domain-separated.  For every concrete certificate $X$, let
\[
  \mathsf{cid}(X)=H(\textsf{CERT-ID},\operatorname{DER}(X)),
\]
where $\operatorname{DER}(X)$ is the canonical certificate encoding.  Thus two cross-signed certificates with the same subject and public key but different issuers have different certificate identifiers.  For a retained lineage $\bar P=(X_0,\ldots,R,V)$, define
\[
  \mathsf{pid}_{\bar P}=H(\textsf{PATH-ID},
  \operatorname{enc}(V,\mathsf{cid}(X_0),\ldots,\mathsf{cid}(R))).
\]
A typed stream descriptor $s=(\mathsf{type},c_s)$ has
\[
\begin{aligned}
  \mathsf{strid}(s)&=H(\textsf{STREAM-ID},
      \operatorname{enc}(\mathsf{type},c_s)),\\
  \rho_s&=H(\textsf{REF},\operatorname{enc}(\mathsf{strid}(s))).
\end{aligned}
\]
Stable references therefore name streams, not epochs or current values.

Topology is held separately.  Let $\mathsf{Compile}_v(\bar P)$ be the compiler output under topology/profile version $v$, and let
\[
  \mathsf{Refs}_v(\bar P)=\mathsf{CanonicalRefs}(\mathsf{Compile}_v(\bar P))
      =(\rho_1^*,\ldots,\rho_m^*)
\]
be its canonical reference vector.  Define
\[
 d_{\bar P}^*=H(\textsf{ORDERED-REFS},
        \operatorname{enc}(\rho_1^*,\ldots,\rho_m^*))
\]
and
\[
 c_{\bar P}^*=H(\textsf{TERM-CHECK},\mathsf{pid}_{\bar P},m,d_{\bar P}^*).
\]
The current terminal topology $T_v$ is keyed by $\mathsf{pid}_{\bar P}$ and stores exactly one current terminal statement
\[
  T_v[\mathsf{pid}_{\bar P}]=(m,d_{\bar P}^*,c_{\bar P}^*)
\]
for each supported retained lineage.  This is the terminal/compiler consistency invariant: a terminal is not merely some authenticated vector for a path identity; it is the vector currently emitted by the compiler for that lineage.  Retained topology no longer duplicates full mutable records for every path, and record-only updates do not rewrite terminals.  If a compiler/profile change changes $\mathsf{Refs}_v(\bar P)$, it is a topology update and must advance $v$.

A full-state client stores
\[
\mathcal C_{e,v}=(\mathsf{cfg},pk_{\rm upd},e,v,R_e,Q_v,M_e,T_v,\mathsf{sid}_{e,v}),
\]
where
\[
\mathsf{sid}_{e,v}=H(\textsf{STATE},\mathsf{cfg},e,v,R_e,Q_v)
\]
is an update-plane identifier.  There is no global closure deadline.  For a selected context, the verifier derives $D_{\bar P,e}$ from the authenticated records selected by $\bar P$.

\subsection{Delta and checkpoint installation}

LR+ supports two signed delta forms on the update plane.  A \emph{consecutive delta} from $(e-1,v)$ to $(e,v')$ contains configuration, predecessor and successor state identifiers, canonical mutable-map changes, canonical topology changes, and a ML-DSA-44 signature.  The client installs it only if the configuration and key are pinned, the predecessor equals its installed state, the epoch is consecutive, topology versioning is monotone, the signature verifies, and local application recomputes exactly the successor roots and state identifier.

A \emph{range catch-up delta} is an optional signed transition from an installed state $(e,v)$ to a later state $(e',v')$, with $e'>e$.  It contains both endpoint state identifiers and roots, the final values/generations/horizons for every typed record whose effective state differs across the range, the topology endpoint, and a domain-separated publisher signature.  The client accepts it only from its exact installed predecessor; all patched record generations must advance monotonically; applying the payload to the installed maps must recompute the signed successor roots and state identifier; and topology versioning must satisfy the same monotonicity rule as consecutive deltas.  Thus a range delta is not a freshness shortcut and not a warm-path proof: it is a compact update-plane catch-up object whose accepted endpoint is equivalent to installing the corresponding valid consecutive transition sequence.

\begin{proposition}[Range endpoint equivalence]
\label{prop:range-delta}
Assume a pinned update key, collision-resistant state commitments, canonical delta encodings, and the topology-version invariant.  If a range delta from installed state $\mathcal C_{e,v}$ to endpoint $\mathcal C_{e',v'}$ is accepted, then the resulting local state is identical to the endpoint obtained by applying any valid consecutive sequence from $e$ to $e'$ with the same final typed-record values and topology endpoint.  Conversely, any missing, stale, non-monotone, or tampered changed record causes the recomputed endpoint root or state identifier check to fail, except with a hash-binding failure.
\end{proposition}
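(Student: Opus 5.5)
The plan is to prove the two directions separately, using the client's acceptance checks for a range delta as the only source of information. The central observation is that an accepted range delta yields a local state that is a deterministic function of three things: the installed predecessor $\mathcal C_{e,v}$, the signed payload (final typed-record values, generations, and horizons for the changed streams, plus the topology endpoint), and the pinned configuration. Likewise, the endpoint of a valid consecutive sequence is determined by the predecessor and the \emph{net} change over the range. This holds because each consecutive delta overwrites records keyed by stable stream identifiers $s$. Applying such overwrites in sequence leaves, for each stream, its last written value, and leaves every untouched stream at its predecessor value.

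\emph{Forward direction.} First I will show that $M_{e'}$ obtained by sequential consecutive installation equals $M_e$ patched with the final values of exactly those streams whose effective state differs across the range. Streams that changed and then changed back keep identical final records under both routes, up to generation. I will handle this by requiring that the range payload carry final generations, so the records agree in full. Topology is treated the same way. The topology-version invariant ensures that the final $T_{v'}$ is the signed endpoint topology and that no intermediate reference vector survives. Canonical encoding then gives $\mathsf{Root}(M_{e'})=R_{e'}$ and the same $Q_{v'}$ on both routes. Hence the two routes give equal $\mathsf{sid}_{e',v'}=H(\textsf{STATE},\mathsf{cfg},e',v',R_{e'},Q_{v'})$ and identical $\mathcal C_{e',v'}$, since the tuple's other components ($\mathsf{cfg},pk_{\rm upd}$) are pinned and unchanged.

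\emph{Converse direction.} Fix the signed successor roots $(R_{e'},Q_{v'})$ and identifier $\mathsf{sid}_{e',v'}$; under the pinned key these are authentic unless the update signature is forged. Suppose the payload is missing a changed record, carries a stale value or generation, or has been tampered with. Then the locally recomputed map $\tilde M$ differs from the map $M_{e'}$ that the publisher committed to. I will split into cases. (i) If the tampered payload includes a non-monotone generation, the explicit monotonicity check rejects it outright. (ii) If $\tilde M\neq M_{e'}$ but $\mathsf{Root}(\tilde M)=R_{e'}$, this gives a collision in the state commitment, which is a hash-binding failure. (iii) Any mismatch in topology reduces the same way via $Q_{v'}$, or through the $\textsf{STATE}$ hash if only the identifier is compared. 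Whether the signature covers the tampered bytes decides between two outcomes. If it does, the tampering is a forgery. If the signature covers only the endpoint identifiers, the tampering shows up as a recomputation mismatch.

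The main obstacle is making precise that ``same final typed-record values'' really determines the state. Two points need care: a stream that is created or deleted inside the range, and a stream whose value is unchanged while its generation advanced. I would treat deletion as a canonical tombstone record. I would also interpret ``effective state differs'' as a difference in the full record tuple $m_{s,e}$, generation included, so that the recomputed root is injective on exactly the data the statement quantifies over.
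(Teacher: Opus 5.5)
Your proposal is correct and follows essentially the reasoning the paper gives in its appendix description of range-delta installation. Records not carried in the payload stay unchanged, an accepted payload must recompute the signed successor roots and $\mathsf{sid}_{e',v'}$, and so any missing, stale, non-monotone, or tampered changed record is rejected except under a hash-binding failure (or, as you rightly add, a forgery under the pinned key). The paper leans on that root/$\mathsf{sid}$ recomputation check plus binding rather than on overwrite semantics, which also settles your worry about changed-and-reverted streams with bumped generations: no accepted payload can omit a stream whose committed tuple differs from the installed one.
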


A checkpoint authenticates a complete current snapshot.  It is domain-separated from deltas and range deltas, includes the configuration, epoch, topology version, roots, state identifier, and snapshot digest, and is accepted only if it is monotone relative to the installed state.  The topology invariant is explicit:
\[
 v'=v\Rightarrow Q_{v'}=Q_v.
\]
Thus any topology-root change increments the topology version.  Consecutive deltas give exact predecessor chaining; range deltas give signed endpoint catch-up from a known installed state; checkpoint mode gives authenticated monotone recovery.  A newer endpoint remains usable for warm authentication only while selected records' horizons justify it.

\subsection{State-local warm authentication}
\label{sec:warm-wire}

The warm origin message does not match the client's global state identifier, and it also does not transmit a terminal key or terminal check value.  Its wire object is
\[
\mathsf{WireWarm}=(\tau,\rho_1,\ldots,\rho_m),
\]
where each $\rho_i$ is a 48-byte stable dependency reference.  The 8-byte header is parsed as
\[
  \tau=(\mathsf{type}:1,\mathsf{version}:1,\mathsf{flags}:2,m:4)
\]
in network byte order.  Parsing rejects unless the type is \textsf{LRPLUS-WARM}, the version is supported, reserved flags are zero, $m$ is within the configured maximum, and the encoded byte string has exactly $8+48m$ bytes with no trailing or truncated reference material.  Thus
\[
 B_{\rm LR+,warm}(m)=8+48m.
\]

Baseline path building and static validation return the exact selected CA-context lineage $\bar P$ for the full validation instance $P_{\rm full}$.  From $\bar P$ and the transmitted references the verifier computes
\[
 d_{\rm wire}=H(\textsf{ORDERED-REFS},\operatorname{enc}(\rho_1,\ldots,\rho_m)),
\]
\[
 c_{\rm wire}=H(\textsf{TERM-CHECK},\mathsf{pid}_{\bar P},m,d_{\rm wire}).
\]
It then queries the installed topology at $\mathsf{pid}_{\bar P}$.  Acceptance requires
\[
  T_v[\mathsf{pid}_{\bar P}]=(m,d_{\rm wire},c_{\rm wire}).
\]
By the terminal/compiler consistency invariant above, this equality implies that the transmitted reference vector is exactly $\mathsf{Refs}_v(\bar P)$, except with the hash-binding error accounted for in Theorem~\ref{thm:security}.  Hence the warm wire cannot select a stale terminal vector after the compiler has added, removed, or reordered dependencies for the same lineage.

Warm verification proceeds against the verifier's installed authenticated state:
\begin{enumerate}
  \item parse $\mathsf{WireWarm}$ and reject malformed length, version, flag, or count encodings;
  \item build a candidate path and run baseline static validation, obtaining $(\mathsf{accept},\bar P,\mathsf{leafState})$;
  \item derive $\mathsf{pid}_{\bar P}$, $d_{\rm wire}$, and $c_{\rm wire}$ from $\bar P$ and $(\rho_1,\ldots,\rho_m)$;
  \item check $T_v[\mathsf{pid}_{\bar P}]=(m,d_{\rm wire},c_{\rm wire})$;
  \item resolve each reference in $M_e$ in canonical order, check source-contract modes and horizons, and evaluate the typed effective values under the declared profile.
\end{enumerate}

\begin{proposition}[Warm wire-to-compiler binding]
\label{prop:warm-wire}
Assume collision-resistant domain-separated encodings and the terminal/compiler consistency invariant for $T_v$.  If warm verification accepts $\mathsf{WireWarm}=(\tau,\rho_1,\ldots,\rho_m)$ for selected lineage $\bar P$ in installed state $(M_e,T_v)$, then
\[
  (\rho_1,\ldots,\rho_m)=\mathsf{Refs}_v(\bar P).
\]
Moreover, the transmitted warm bytes are exactly the parsed 8-byte header plus the $m$ 48-byte references; all terminal identities and check values used by the verifier are deterministic local functions of the selected lineage and this reference vector.
\end{proposition}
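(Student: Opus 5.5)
The argument chains the acceptance check in step 4 of warm verification to the terminal/compiler consistency invariant, through two collision-resistance steps. The verifier accepts only if $T_v[\mathsf{pid}_{\bar P}]=(m,d_{\rm wire},c_{\rm wire})$. The invariant says the unique entry at key $\mathsf{pid}_{\bar P}$ is $(m^*,d^*_{\bar P},c^*_{\bar P})$, computed from $\mathsf{Refs}_v(\bar P)=(\rho_1^*,\ldots,\rho_{m^*}^*)$. So acceptance gives three equalities: $m=m^*$, $d_{\rm wire}=d^*_{\bar P}$ and $c_{\rm wire}=c^*_{\bar P}$. The first is an exact integer equality, because $m$ is parsed from the header and stored verbatim.

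Next I unfold $d_{\rm wire}=d^*_{\bar P}$. Both sides are $H(\textsf{ORDERED-REFS},\operatorname{enc}(\cdot))$ applied to reference vectors. Either the two encodings are equal or we have a collision under the \textsf{ORDERED-REFS} domain. If the encodings are equal, injectivity of the canonical encoding gives $(\rho_1,\ldots,\rho_m)=(\rho_1^*,\ldots,\rho_m^*)$ componentwise, including order. The encoding is length-delimited, with fixed 48-byte entries and $m$ known from the header, so no reordering, insertion or deletion survives.

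The check value $c_{\rm wire}$ is redundant for this conclusion, but consistent with it. It uses the same $\mathsf{pid}_{\bar P}$, the same $m$ and the same digest. Its role is only to bind the digest to the lineage identity, and the keying by $\mathsf{pid}_{\bar P}$ already does that. I would also note that $\mathsf{pid}_{\bar P}$ is taken from baseline validation's exact output $\bar P$, as in Proposition~\ref{prop:lifting}. This is not supplied by the wire, so the topology lookup is at the correct lineage. Distinct lineages get distinct keys, except with a \textsf{PATH-ID} collision, which is again a hash-binding failure.

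The "moreover" clause follows from the parsing rule and needs no cryptography. Parsing rejects unless the byte string has exactly $8+48m$ bytes, the type and version are valid, and the reserved flags are zero. So an accepted message consists of exactly $\tau$ and $m$ references. The values $\mathsf{pid}_{\bar P}$, $d_{\rm wire}$ and $c_{\rm wire}$ are given by explicit deterministic formulas in $\bar P$ and $(\rho_i)$. The main obstacle I expect is stating the failure event precisely. The conclusion has to hold "except with hash-binding error", and that error must be bounded by the collision advantage against the \textsf{PATH-ID}, \textsf{ORDERED-REFS} and \textsf{TERM-CHECK} domains, so that it can later be absorbed into Theorem~\ref{thm:security}.
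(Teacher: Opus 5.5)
Your proposal is correct and takes essentially the same route as the paper. Acceptance forces equality with the unique current terminal at $\mathsf{pid}_{\bar P}$, the terminal/compiler invariant identifies that terminal with the digest of $\mathsf{Refs}_v(\bar P)$, any vector mismatch is charged to a hash-binding collision, and the ``moreover'' clause follows from the exact-length parsing rule. You also add two accurate refinements of the paper's one-line charge to an ``ordered-reference or terminal-check'' collision: the \textsf{ORDERED-REFS} component alone already suffices, and a \textsf{PATH-ID} collision is the residual event when the lookup key might belong to a different lineage.
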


\begin{proof}
Parsing fixes a unique value of $m$ and a unique ordered reference vector.  The verifier computes $d_{\rm wire}$ and $c_{\rm wire}$ from that vector and the locally selected $\bar P$, then requires $T_v[\mathsf{pid}_{\bar P}]=(m,d_{\rm wire},c_{\rm wire})$.  The invariant says the unique current terminal stored at that path identifier commits to $\mathsf{Refs}_v(\bar P)$.  If the accepted wire vector differs from that compiler output, the two distinct canonical vectors produce the same ordered-reference or terminal-check digest, which is exactly the hash-binding event already charged in the security theorem.
\end{proof}

A replayed warm selector contains references, not old authority values.  The verifier evaluates it against its current installed state.  If an unrelated record changes, the same wire selector still resolves the selected records in the new state.  If a compiler/profile update changes the selected dependency vector, the topology terminal for $\mathsf{pid}_{\bar P}$ changes and old wire vectors fail.  If a selected record value changes, the selector resolves the new effective value or fails by policy or horizon checks.  If the verifier is stale because updates were withheld, residual acceptance is bounded by selected dependency horizons and source-contract failure terms.

\subsection{View and path negotiation}
\label{sec:view-negotiation}

State-local selectors are secure but not self-negotiating.  The selector is bound to the locally selected lineage
\[
  \bar P=(X_0,\ldots,R,V),
\]
so an origin that serves heterogeneous clients must choose a selector compatible with the client's closure profile and candidate path.  LR+ treats this as an explicit negotiation problem rather than requiring the origin to guess a browser's trust-store view.  This is aligned with ongoing TLS work on trust-anchor identifiers, where relying parties can convey supported trust anchors and authenticating parties can use that signal to choose a compatible certificate representation~\cite{trustanchorids,mtc05}.

Let
\[
\begin{aligned}
  \mathsf{profid}=H(&\textsf{LRPLUS-PROFILE},\mathsf{pub},V,\\
                   &\mathsf{compilerVersion},\mathsf{rootSetGeneration}).
\end{aligned}
\]
be a coarse closure-profile identifier.  It identifies a publisher, trust-store view, compiler/profile version, and root-set generation.  It intentionally excludes the client's exact installed epoch, mutable root, topology root, and update-plane state identifier.  Advertising a profile therefore supports selector choice without recreating global-state synchronization.

LR+ supports two delivery modes.  In \emph{profile-hint mode}, the client supplies \(\mathsf{profid}\) or an equivalent coarse trust-anchor/profile signal, and the origin returns one \(\mathsf{WireWarm}\) for a candidate lineage under that profile.  The server-to-client closure overhead remains
\[
  B_{\rm hint}(m)=8+48m,
\]
with any client hint accounted separately.  In \emph{catalog mode}, the origin sends a bounded set
\[
  \mathsf{Catalog}=(\tau_{\rm cat},(\mathsf{profid}_j,\mathsf{WireWarm}_j)_{j=1}^s).
\]
The catalog header is eight bytes and every profile identifier is 48 bytes, so
\[
  B_{\rm cat}=8+\sum_{j=1}^{s}\left(48+8+48m_j\right).
\]
The client selects the unique entry whose \(\mathsf{profid}\) matches its installed closure profile and then runs the ordinary warm verifier.  If no entry matches, if two entries match, or if the entry names the wrong lineage for the client's path builder, verification fails before authority values are used.

Profile-hint mode minimizes server bytes but exposes a coarse profile signal.  Catalog mode avoids a client profile hint at the cost of sending several selectors; it is useful for origins that already maintain a small set of candidate lineages for major browser/root-program views.  Neither mode exposes the client's exact installed state.  In both modes, security still comes from local derivation of \(\mathsf{pid}_{\bar P}\) and the current-terminal check in Section~\ref{sec:warm-wire}; negotiation only determines which candidate wire vector is attempted.

\subsection{Typed CA-context compiler}
\label{sec:compiler}

For a CA-context lineage $\bar P=(X_0,X_1,\ldots,R,V)$ with CA-edge depth $d$, the evaluated core compiler emits
\[
\Dep_V(\bar P)=\{S_{\rm rprof}(R,V),S_{\rm root}(R,V)\}\cup\mathcal E_{\bar P},
\]
where
\[
\mathcal E_{\bar P}=\{S_{\rm edge}(X_i,X_{i+1},V),S_{\rm cstat}(X_i,X_{i+1},V):0\le i<d\}.
\]
Hence the evaluated core dependency count is
\[
 m(d)=2+2d.
\]
In this evaluated core profile, the regular dependency shape makes the count a deterministic function of CA-edge depth.  This does not mean that depth itself is the security model, nor that all profiles must have this regular shape.  The previous untyped depth heuristic is replaced by a semantically justified compiler: the core profile instantiates a root-profile record and a root-authorization record for the selected root/view pair, and two typed mutable records for each selected CA transition.  The $S_{\rm rprof}(R,V)$ stream binds the compiler/profile version, source-contract namespace, and root-set generation used to interpret root $R$ under view $V$; it is intentionally keyed by the selected root and view, not only by the browser view.  The $S_{\rm root}(R,V)$ stream captures the root-program authorization fact itself.  Each selected-edge stream captures selected-parent authorization and policy-generation state for that transition.  Each $S_{\rm cstat}$ stream captures status or distrust facts whose source contract is scoped to the selected context $(X_i,X_{i+1},V)$.

\paragraph{Canonicalization and correlated facts.}
The notation above is deliberately scoped.  Root-profile records are root/view scoped, explaining why their all-view count equals the root-authorization count rather than the number of browser views.  In the artifact output these records keep the family label 	exttt{PROFILE}; their stream scope is nevertheless $(R,V)$, recorded through the root identifier and profile identifier columns, and should not be read as a view-only stream.  Similarly, the evaluated artifact has one context-status stream per selected-edge stream; the equality between these two counts is expected, not an accidental measurement claim.  If a deployment has a source contract proving that a status fact is independent of the selected parent context, the compiler may instead canonicalize it as a shared CA/view stream $S_{\rm castat}(X_i,V)$ and reuse it across parent edges.  The paper's headline numbers do not rely on this optional deduplication.  A compiler is correlation-well-formed, written $\mathsf{WF}_{\rm corr}(\Dep_V(\bar P))$, when authority facts that are updated only jointly, or whose admissible histories are not product-separable, are emitted inside one composite record or assigned an explicit joint-admissibility group.

\begin{proposition}[Core-profile compiler completeness]
\label{prop:compiler}
Let $\Sem_{\rm core}$ be the declared CA-context semantics whose mutable inputs are exactly the effective values of $S_{\rm rprof}(R,V)$, $S_{\rm root}(R,V)$, $S_{\rm edge}(X_i,X_{i+1},V)$, and $S_{\rm cstat}(X_i,X_{i+1},V)$, together with any composite groups required by $\mathsf{WF}_{\rm corr}$.  Assume $\mathsf{WF}_{\rm corr}(\Dep_V(\bar P))$.  If two admissible worlds agree on baseline evidence $B_P$ and on every effective dependency value emitted by the compiler for $\bar P$, then they agree on $\Sem_{\rm core}$ for the selected full path.
\end{proposition}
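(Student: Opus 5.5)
The argument is essentially a factorization argument, structured like the proof of Proposition~\ref{prop:lifting}. First I reduce the full-path claim to a CA-context claim, then check that the compiler's output covers every mutable input of $\Sem_{\rm core}$.

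\textbf{Step 1: fix the shared lineage.} Take admissible worlds $w,w'$ with $B_P(w)=B_P(w')$ and equal effective values on every stream in $\Dep_V(\bar P)$. Equality of the baseline fixes $C_{\rm EE}$, the certificate-local decisions, and the leaf state. Since $\mathsf{BaseValidate}$ is deterministic on the baseline, it also fixes the exact lineage $\bar P=(X_0,\ldots,R,V)$ and hence the CA-edge depth $d$. Because $\mathsf{cid}$ is computed from full DER encodings, the selected concrete certificates, and not merely keys or subjects, coincide. Consequently the compiler emits the same set $\Dep_V(\bar P)$ in both worlds; this set is a function of $\bar P$ alone.

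\textbf{Step 2: match the compiled streams to the inputs of $\Sem_{\rm core}$.} Enumerate the mutable inputs of $\Sem_{\rm core}$ for $\bar P$. By definition these are $S_{\rm rprof}(R,V)$ and $S_{\rm root}(R,V)$, the pairs $S_{\rm edge}(X_i,X_{i+1},V)$ and $S_{\rm cstat}(X_i,X_{i+1},V)$ for $0\le i<d$, and any composite groups required by correlation. Compare this list with the compiler output $\{S_{\rm rprof},S_{\rm root}\}\cup\mathcal E_{\bar P}$: the two coincide index by index.

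This step needs care on two points.
\begin{itemize}
  \item \emph{Scoping.} The streams must be keyed by the selected pair $(R,V)$ and by each selected edge $(X_i,X_{i+1},V)$. A view-only key, or a key on $X_i$ alone, would allow the cross-signing collapse of Section~\ref{sec:wrong-context}.
  \item \emph{Correlation.} By $\mathsf{WF}_{\rm corr}$, any composite group is either a single emitted record or an explicitly grouped set. So equality of the emitted effective values fixes the group value as well; no correlated fact is split away and left unbound.
\end{itemize}

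\textbf{Step 3: conclude by factorization.} $\Sem_{\rm core}$ is, by its declaration, a function of the baseline-determined data (the leaf state and static validity) together with these effective values evaluated at $t$. Both arguments agree in $w$ and $w'$, so $\Sem_{\rm core}$ agrees. This is precisely Definition~\ref{def:dependency-completeness} for the core profile. Proposition~\ref{prop:lifting} then transfers the result from the retained lineage $\bar P$ to the selected full path $P_{\rm full}$.

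\textbf{Main obstacle.} I expect the hardest part to be Step 2's correlation clause. One must argue that admissibility never forces a dependence between an emitted value and some unemitted fact that $\Sem_{\rm core}$ reads indirectly. My first attempt is to note that this cannot happen by definition: $\Sem_{\rm core}$ reads only the listed effective values, and $\mathsf{WF}_{\rm corr}$ places every jointly updated fact inside one record. The remaining subtlety is to make explicit that the effective values are taken after grants are applied, as in the definition of $A_i$. With that convention, any $\mathsf L$-neutralized underlying variation is already absorbed into the compared values.
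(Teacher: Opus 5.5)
Your proposal is correct and follows essentially the same route as the paper: by declaration, $\Sem_{\rm core}$ reads only the emitted root-profile, root, selected-edge, and context-status effective values (with correlated facts grouped under $\mathsf{WF}_{\rm corr}$), so agreement on those values together with equal baseline evidence fixes the semantics, and Proposition~\ref{prop:lifting} carries the conclusion to the selected full path. Your remarks on $\mathsf{cid}$ and on stream scoping are harmless elaborations, but the argument does not need them: the statement already fixes $\bar P$ and the declared stream keys.
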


\begin{proof}
The statement is deliberately profile-relative.  By definition, $\Sem_{\rm core}$ reads only the root-profile, root-view, selected-edge, and context-status effective values assigned to the compiler, with correlated facts grouped as required by $\mathsf{WF}_{\rm corr}$.  Agreement on the emitted effective values therefore fixes the mutable CA-context part of $\Sem_{\rm core}$.  Together with equal baseline evidence and Proposition~\ref{prop:lifting}, the selected full-path semantics of the core profile agree.  This proposition is not a claim of complete live browser-policy reconstruction; richer browser semantics require a richer declared profile and compiler.
\end{proof}

\subsection{Comparison boundary}

LR+ is one stateful realization of context closure, not a replacement for Merkle trees or state-distribution systems.  Table~\ref{tab:semantic-scope} compares what each approach authenticates, rather than assigning a security ranking.  Independent PQ authentication can bind a full path when that path is included, but current dependency statements must then be carried again.  MTL inherits the semantics of the application messages supplied to it.  MTC natively authenticates issuance/log state relative to distributed landmarks and also includes relying-party maintained issuance-log state such as trusted subtrees and revoked serial ranges; LR+ is concerned with the broader selected-path authorization closure that composes root/profile, parent, status, distrust, policy, and bounded-grant dependencies.  Profile~S and LR+ both carry selected-context closure semantics; they differ mainly in state placement.

\begin{table*}[t]
  \caption{Semantic scope of the compared approaches. ``App.'' means application-defined; ``external'' means complementary mechanisms supply the corresponding path/status semantics.}
  \label{tab:semantic-scope}
  \centering
  \small
  \begin{tabular}{@{}p{0.17\textwidth}p{0.19\textwidth}p{0.15\textwidth}p{0.18\textwidth}p{0.20\textwidth}@{}}
    \toprule
    Approach & Authenticated unit & Client state & Context binding & Mutable authority \\
    \midrule
    Independent per-path PQ & Objects / dependency statements & None required & Optional full path & Resend current statements \\
    MTL-style~\cite{mtl09} & Message series & App.-defined & App.-defined & App.-defined \\
    MTC landmark-relative~\cite{mtc05} & Issuance / log state & Landmark / subtree & Issuance scope & Trusted subtrees / revoked ranges + complements \\
    Profile S (this paper) & Signed closure snapshot & Optional epoch cache & Ordered terminal & Per-record horizons \\
    LR+ (this paper) & Closure roots + selectors & Monotone closure state & CA-context terminal & F/C/L + $D_{\bar P,e}$ \\
    \bottomrule
  \end{tabular}
\end{table*}

\section{Security of LR+}
\label{sec:security}

The security goal is selected-context soundness.  A quantum polynomial-time adversary wins if it makes a verifier accept a warm transcript for a selected full validation instance whose ideal semantics under the declared profile reject, while the adversary has not caused one of the explicitly modeled failure events.  The selected path is adaptive: the reduction considers the first accepted winning transcript, not a union over all supported paths.

For an adversary $\mathcal A$, let $P^\star$ be the selected context in its first accepted winning transcript, if such a transcript exists.  Define the adaptive selected-source failure term
\[
 \epsilon_{\rm src}^{\rm sel}(\mathcal A)=
 \Pr\!\left[\exists S_i\in\Dep(P^\star):\mathsf{SrcBad}_i\right],
\]
where the probability is over the experiment, the source histories, and the adversary's adaptive choice of transcript.  This term is intentionally selected-context local but adversary-adaptive: it does not assume that marginal source failures remain marginal after the adversary observes state and chooses a path.

\begin{theorem}[Conditional PQ selected-context soundness]
\label{thm:security}
For the LR+ profile with ML-DSA-44 update signatures, canonical injective encodings followed by domain-separated SHA-384 commitments under quantum collision resistance, sound baseline validation, trusted time, and the declared source contracts, any QPT adversary $\mathcal A$ satisfies
\[
\begin{aligned}
 \Adv^{\rm ctx\mbox{-}sel}_{\rm LR+}(\mathcal A)
 \le{}& \Adv^{\rm pq\mbox{-}euf}_{\mathsf{Sig}}(\mathcal B_{\rm sig})
       +\Adv^{\rm qcr}_{H}(\mathcal B_H)\\
 &     +\epsilon_{\rm base}^{\rm sel}+\epsilon_{\rm clk}
       +\epsilon_{\rm src}^{\rm sel}(\mathcal A).
\end{aligned}
\]
\end{theorem}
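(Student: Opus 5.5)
The plan is a game-hopping argument organized around the first accepted winning transcript. Let $\mathsf{Win}$ denote the event that the verifier accepts a warm transcript for a selected full instance $P^\star_{\rm full}$, with retained lineage $\bar P^\star$, whose ideal $\Sem_{\rm core}$ rejects. I will define five failure events:
\begin{itemize}
\item $\mathsf{Forge}$: the installed state contains a delta, range delta, or checkpoint whose signed message was never output by the honest publisher.
\item $\mathsf{Coll}$: some two distinct canonical encodings hash to the same domain-separated SHA-384 value.
\item $\mathsf{BaseBad}$: baseline validation accepts and returns $(\mathsf{accept},\bar P^\star,\cdot)$, but this disagrees with the ideal baseline semantics for the selected instance.
\item $\mathsf{ClkBad}$: the trusted clock reading violates $t\le D_{\bar P^\star,e}$ while the verifier believes it holds.
\item $\mathsf{SrcBad}_i$: some $S_i\in\Dep(\bar P^\star)$ violates its declared source contract.
\end{itemize}
The core claim is that $\mathsf{Win}\wedge\neg(\mathsf{Forge}\vee\mathsf{Coll}\vee\mathsf{BaseBad}\vee\mathsf{ClkBad}\vee\bigvee_i\mathsf{SrcBad}_i)$ is impossible. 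Given that, a union bound yields the stated inequality once each remaining probability is matched to the corresponding term.

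For the cryptographic terms, $\mathcal B_{\rm sig}$ simulates the publisher using the ML-DSA-44 signing oracle and the pinned $pk_{\rm upd}$. It outputs the first installed update-plane object whose domain-separated message was not queried; domain separation between deltas, range deltas and checkpoints keeps message types from aliasing. $\mathcal B_H$ runs the whole experiment honestly and outputs the first colliding pair among all canonical encodings it computes. These are the Merkle and state roots $R_e,Q_v$, $\mathsf{sid}_{e,v}$, $\mathsf{pid}$, $\mathsf{strid}$, $\rho_s$, $d^*$ and $c^*$. Quantum collision resistance is invoked only on classical outputs, so no rewinding or oracle reprogramming is needed, which keeps the reduction QPT-straightforward. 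The baseline and clock events are charged directly to $\epsilon_{\rm base}^{\rm sel}$ and $\epsilon_{\rm clk}$. The source event is defined over the adaptively chosen $P^\star$, so it is exactly $\epsilon_{\rm src}^{\rm sel}(\mathcal A)$ and needs no union over paths.

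The main step is the impossibility argument, which chains the earlier results in order.
\begin{enumerate}
\item By $\neg\mathsf{Forge}$ and $\neg\mathsf{Coll}$, every installed state $(M_e,T_v)$ equals some honest publisher state. Consecutive installs are checked by predecessor chaining and root recomputation. Range deltas are handled by Proposition~\ref{prop:range-delta}. Checkpoints are handled by monotonicity together with the invariant $v'=v\Rightarrow Q_{v'}=Q_v$.
\item By Proposition~\ref{prop:warm-wire}, acceptance forces $(\rho_1,\ldots,\rho_m)=\mathsf{Refs}_v(\bar P^\star)$. Under $\neg\mathsf{Coll}$, each reference resolves to the honest record of the intended stream.
\item The resolved records carry source contracts and horizons. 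Under $\neg\mathsf{SrcBad}_i$ each descriptor is a valid $\mathsf F$, $\mathsf C$ or $\mathsf L$ witness, and under $\neg\mathsf{ClkBad}$ each remains valid through $t$. The sufficiency direction of Theorem~\ref{thm:fcl} then makes the accepted vector context-sufficient.
\item Proposition~\ref{prop:compiler} and Proposition~\ref{prop:lifting}, together with $\neg\mathsf{BaseBad}$, show that the verifier's evaluation equals the ideal $\Sem_{\rm core}$ of the selected full path. The verifier therefore rejects whenever the ideal rejects, contradicting $\mathsf{Win}$.
\end{enumerate}

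I expect the main obstacle to be the stale or withheld-state case in step 3. The installed honest state may be older than the world's current state, so we must argue that the ideal semantics at time $t$ agree with the semantics evaluated at epoch $e$. This is precisely where the horizon $D_{\bar P^\star,e}$ and the source contracts must carry the load. I would first formalize each contract as the guarantee that every admissible history consistent with the epoch-$e$ descriptor has the same effective value $A_i(\cdot,t)$ for all $t\le D_{i,e}$. With that formalization, the minimum-horizon rule plus $\neg\mathsf{ClkBad}$ covers every selected dependency simultaneously.

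A secondary subtlety is the adaptivity of $P^\star$. I would handle it by defining all bad events relative to the first winning transcript, so that no hybrid needs to fix the path in advance.
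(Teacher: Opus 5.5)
Your proposal is correct and follows the paper's argument. Both abort on update or checkpoint signatures that were never queried (the ML-DSA-44 reduction), on hash or encoding collisions (the QCR reduction), and on baseline or clock failure. Both then use Proposition~\ref{prop:warm-wire}, the source contracts within the selected horizon, and compiler completeness plus full-path lifting to contradict the win condition, charging $\mathsf{SrcBad}_i$ adaptively to the first winning transcript's $P^\star$.

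The one deviation is your step~3 detour through the sufficiency half of Theorem~\ref{thm:fcl}. The paper does not use it, and invoking it formally brings in that theorem's toggle-closure hypothesis. The paper instead goes directly from $\neg\mathsf{SrcBad}_i$ and in-horizon time to the statement that each reconstructed effective value equals the ideal one, which is exactly what your step~4 consumes.
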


\begin{corollary}[Path-local source instantiation]
\label{cor:adaptive-src}
Suppose each source contract satisfies the adaptive conditional guarantee
\[
\Pr[\mathsf{SrcBad}_i\mid T,\;S_i\in\Dep(P^\star)]\le \epsilon_i^{\rm src}
\]
for every admissible prior adversarial transcript $T$ and every selected dependency event in the final transcript.  Then
\[
 \epsilon_{\rm src}^{\rm sel}(\mathcal A)
 \le
 \sup_{P\in\mathcal P_{\rm cfg}}
   \sum_{S_i\in\Dep(P)}\epsilon_i^{\rm src}.
\]
In particular, if all selected streams admit the same adaptive bound $\epsilon_{\rm src}^{\max}$ and every supported context selects at most $m_{\max}$ dependencies, then $\epsilon_{\rm src}^{\rm sel}(\mathcal A)\le m_{\max}\epsilon_{\rm src}^{\max}$.
\end{corollary}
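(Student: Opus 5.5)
The plan is to bound $\epsilon_{\rm src}^{\rm sel}(\mathcal A)=\Pr[\exists S_i\in\Dep(P^\star):\mathsf{SrcBad}_i]$ by conditioning on the adaptively chosen final context and then applying a union bound inside that context. Let $T$ be the prior adversarial transcript up to the first accepted winning warm transcript, and let $P^\star=P^\star(T)$ be the selected context it determines. Because $P^\star$ is a function of the experiment, it always lies in the configured set $\mathcal P_{\rm cfg}$ of supported lineages: warm verification rejects unless $T_v[\mathsf{pid}_{\bar P}]$ exists, and only supported lineages carry terminals. Once $T$ is fixed, the set $\Dep(P^\star)$ is fixed as well. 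On the event that no winning transcript exists, the failure event is empty and contributes nothing.

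First, for a fixed admissible $T$, apply the union bound over the finite set $\Dep(P^\star(T))$:
\[
\Pr\bigl[\exists S_i\in\Dep(P^\star):\mathsf{SrcBad}_i\mid T\bigr]\le\sum_{S_i\in\Dep(P^\star(T))}\Pr[\mathsf{SrcBad}_i\mid T,\,S_i\in\Dep(P^\star)].
\]
Each summand is at most $\epsilon_i^{\rm src}$ by the adaptive hypothesis. The hypothesis is stated conditionally on every admissible prior transcript $T$ and on selection, which is exactly what this step needs. Since $P^\star(T)\in\mathcal P_{\rm cfg}$, the right-hand side is at most $\sup_{P\in\mathcal P_{\rm cfg}}\sum_{S_i\in\Dep(P)}\epsilon_i^{\rm src}$.

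Second, average over $T$. Since the conditional bound is uniform in $T$, the law of total probability gives the unconditional bound of the corollary. The specialization then follows at once: each sum has at most $|\Dep(P)|\le m_{\max}$ terms, each at most $\epsilon^{\max}_{\rm src}$. For the core profile, $m_{\max}$ is attained as $m(d)=2+2d$ maximized over the supported depths.

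The main obstacle is making the conditioning rigorous, because $P^\star$ is chosen after the adversary sees source state. I would formalize $T$ as the full view that determines $P^\star$, and verify that conditioning on $T$ is precisely the conditioning the hypothesis licenses. This is the point where a naive marginal bound would fail, and it is why the hypothesis is stated for every admissible prior transcript rather than as a marginal bound. If $T$ has uncountably many values, I would integrate the conditional bound against the law of $T$ rather than sum.
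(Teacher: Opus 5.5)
\paragraph{Gap in the conditioning step.}
Your skeleton matches the argument the paper intends: condition on the adversary's history, union-bound inside the selected dependency set, pass to the supremum over $\mathcal P_{\rm cfg}$, then average over $T$. However, the step that handles adaptivity does not go through as you set it up. Your displayed union bound sums over the fixed set $\Dep(P^\star(T))$, which requires $P^\star$ to be a function of the conditioning variable. In the hypothesis, $T$ is the \emph{prior} transcript. The final accepted transcript, and hence $P^\star$, is produced after it, possibly with fresh adversarial coins, so $\Dep(P^\star(T))$ is not defined.

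Your proposed repair is to enlarge $T$ to the full view that fixes $P^\star$. That conditions on a strictly finer $\sigma$-algebra than the hypothesis licenses, and the bound does not transfer. Since $\{S_i\in\Dep(P^\star)\}=\bigcup_{P:\,S_i\in\Dep(P)}\{P^\star=P\}$, the guarantee $\Pr[\mathsf{SrcBad}_i\mid T,S_i\in\Dep(P^\star)]\le\epsilon_i^{\rm src}$ only bounds a weighted average of the quantities $\Pr[\mathsf{SrcBad}_i\mid T,P^\star=P]$. An individual one can be far larger when the adversary's choice among contexts that all contain $S_i$ is correlated with $\mathsf{SrcBad}_i$. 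That is exactly the ``fixed-path bound surviving adaptive selection'' the paper says must not be assumed. It is also why the hypothesis conditions on the membership event rather than on $P^\star$ itself.

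\paragraph{The fix.}
Keep $T$ strictly prior and use the membership conditioning directly. Write $A_i$ for the event $S_i\in\Dep(P^\star)$, and union-bound over all streams rather than over a fixed set:
\[
\begin{aligned}
\Pr[\exists i:\,A_i\wedge\mathsf{SrcBad}_i\mid T]
&\le\sum_i\Pr[A_i\mid T]\,\Pr[\mathsf{SrcBad}_i\mid T,A_i]\\
&\le\sum_i\Pr[A_i\mid T]\,\epsilon_i^{\rm src}.
\end{aligned}
\]
Now $\Pr[A_i\mid T]=\sum_{P\in\mathcal P_{\rm cfg}:\,S_i\in\Dep(P)}\Pr[P^\star=P\mid T]$, and exchanging the nonnegative sums gives
\[
\begin{aligned}
\sum_i\Pr[A_i\mid T]\,\epsilon_i^{\rm src}
&=\sum_{P\in\mathcal P_{\rm cfg}}\Pr[P^\star=P\mid T]\sum_{S_i\in\Dep(P)}\epsilon_i^{\rm src}\\
&\le\sup_{P\in\mathcal P_{\rm cfg}}\sum_{S_i\in\Dep(P)}\epsilon_i^{\rm src},
\end{aligned}
\]
because the weights $\Pr[P^\star=P\mid T]$ sum to at most $1$. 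Averaging over $T$ then yields the corollary, and the $m_{\max}\epsilon_{\rm src}^{\max}$ specialization follows as you say.

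The weights are the missing idea. Bounding $\Pr[A_i\mid T]$ by $1$ would charge every stream that any selectable context might use (tens of thousands of records in the evaluated corpus), instead of only the largest single dependency set.

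\paragraph{A smaller point.}
$P^\star\in\mathcal P_{\rm cfg}$ does not hold ``always.'' Under $\mathsf{SigBad}$ or $\mathsf{HashBad}$, a forged checkpoint can install terminals for unsupported lineages. Either take $\mathcal P_{\rm cfg}$ to be every lineage the configured path builder and compiler can select, or state the bound in the game where those events are already excluded, which is where Theorem~\ref{thm:security} uses it.
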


\paragraph{Proof overview.}
The update plane is authenticated independently of the warm plane.  If the adversary installs a state not produced by the pinned publisher through a consecutive update, range update, or checkpoint, it either forges an authenticated state-transition/checkpoint signature or finds a collision in a committed canonical state.  Conditioned on no such event, the verifier's installed $(M_e,T_v)$ is one of the authenticated states.  A warm wire selector has no global $\mathsf{sid}$, no terminal key, no terminal check value, and no authority values; it is interpreted only against that installed state.  The verifier derives $\bar P$ from baseline path building, parses the wire vector, and checks the terminal at $\mathsf{pid}_{\bar P}$ in its installed topology.  By Proposition~\ref{prop:warm-wire}, any accepted vector equals $\mathsf{Refs}_v(\bar P)$, the current compiler output for that selected lineage, except with hash-binding failure.  The verifier therefore resolves exactly the typed dependencies required by the installed profile.  If negotiation supplied the wrong profile or lineage, the locally derived path identifier does not match the current terminal and the warm vector is rejected before source values are evaluated.  If no selected source-contract failure occurs and trusted time is within the path horizon, every reconstructed effective dependency value equals the ideal authority value used by the declared core profile.  Core-profile compiler completeness and full-path lifting then imply the accepted selected semantics, contradicting the winning condition.

The base theorem uses $\epsilon_{\rm src}^{\rm sel}(\mathcal A)$ because the adversary may choose its final accepted context adaptively after seeing authenticated state.  Corollary~\ref{cor:adaptive-src} recovers the compact path-local summation when source contracts are stated in the stronger conditional form.  This separation prevents a marginal source-error bound for fixed paths from being used as though it automatically survived adaptive path selection.

\paragraph{Source and publisher boundary.}
The terms $\epsilon_i^{\rm src}$ and $\epsilon_{\rm src}^{\rm sel}$ are operational source-contract failure bounds.  They may arise from a protocol guarantee, deployment SLA, explicit model assumption, or measured source reliability.  A valid update signature proves what the publisher authenticated; it does not by itself prove that the upstream source was true.  A pinned publisher that deliberately signs incompatible views has not forged ML-DSA.  Transparency, witnessing, or multi-party consistency mechanisms can be composed to address equivocation, but they are outside the core game.

\paragraph{QROM scope.}
The LR+ composition proof assumes post-quantum EUF-CMA signatures and quantum collision resistance.  It does not program, reprogram, or measure random-oracle queries, so the LR+ composition argument itself needs no QROM step.  Any lower-level QROM analysis of ML-DSA or SHA-384-derived instantiations is modular to the theorem above.

\section{WebPKI Evidence and Source Contracts}
\label{sec:webpki}

The formal model is motivated by public WebPKI structure, but the evaluation is deliberately scoped.  We do not claim a live browser census, live OCSP/CRLite acquisition, or complete platform-policy reconstruction.  We use a pinned public CCADB release to rebuild CA authorization structure and then run the typed compiler over the regenerated path/view trace.

\paragraph{CCADB reconstruction.}
The pinned archive release tag is \texttt{v1.20260828.235636}, resolved to commit prefix \texttt{fbbcc8027606}.  The reproducibility manifest records the full commit and input hashes, including the V5 report, 33 yearly PEM reports covering 1994--2026, and the root-trust input.  Before rebuilding, the existing local corpus is copied to a timestamped backup.  The existing normalizer and graph builder are then run without hand-editing graph or path rows.

The rebuilt graph contains 10,234 records/nodes, 24,653 edges, 365 roots, 228,249 graph paths, and 17,051 active TLS paths before per-view expansion.  The reconstruction treats certification data as a multigraph rather than forcing a tree.  Certificates are keyed by concrete certificate identity, while Subject--SPKI grouping is retained separately to expose cross-signed identities.  Parent candidates come from explicit metadata where available and otherwise issuer/subject plus AKI/SKI compatibility; a candidate is retained only when it is CA-capable and cryptographically verifies the child signature.

Trust is evaluated separately under Apple, Chrome, Microsoft Server Authentication, and Mozilla Websites views.  The root-set evidence is materialized as one row per view/root relation: 352 roots are represented for each configured view.  The compiler emits 44,912 selected path/view contexts, representing 16,858 distinct physical CA-certificate sequences and 72,020 typed dependency records.  Identical physical lineages under different trust-store views remain distinct path/view contexts, because the selected authorization semantics include the view.

\paragraph{Source contracts.}
Event coverage cannot be inferred from a protocol name alone.  A nonce-bearing OCSP response gives a clean $\mathsf F$ example at acquisition time: the responder-signed status is tied to a requester nonce and response production time~\cite{rfc6960,rfc9654}.  This does not turn \texttt{nextUpdate} into proof that no newer revocation exists.  A forward horizon requires an explicit source-latency/effectivity contract, whose violation is an $\epsilon_i^{\rm src}$ event.

Coverage-oriented mechanisms such as CRLite illustrate $\mathsf C$.  A browser-deployed compressed revocation filter is useful only relative to an authenticated universe, scope, and coverage cut.  LR+ consumes such a coverage statement where it applies and attaches a semantic deadline; it does not infer completeness from a filter digest alone~\cite{crlite17,crlite26}.  Bounded leases illustrate $\mathsf L$ when the policy deliberately accepts a grant until expiry.  If the grant can be revoked early, that revocation channel is another dependency.

\paragraph{Deployment publisher.}
The natural publisher for the evaluated profile is a browser-vendor or root-program closure publisher.  Authority sources feed a closure compiler; the publisher signs deltas and checkpoints; browser update infrastructure distributes them; and origins send ordinary object authentication plus LR+ warm selectors.  Origins do not publish mutable authority state.  This is also the clean way to compose LR+ with MTC: MTC can authenticate certificate-issuance objects, CRLite/OCSP can handle leaf-local status when selected, and LR+ closes the mutable CA-context authorization state.

\section{Evaluation}
\label{sec:evaluation}

The evaluation asks whether the semantics above survive a concrete WebPKI-scale reconstruction.  We check the reference state machine, recompute warm communication from the typed compiler, measure retained-state/checkpoint size, and report per-view deployment costs from a pinned CCADB reconstruction.  It does not measure an end-to-end TLS handshake, browser integration, or live authority-source acquisition.

\subsection{Executable gates}

The artifact implements CA-context terminals, state-local warm selectors, typed dependencies, monotone deltas/checkpoints, view negotiation, signed range catch-up deltas, and reconstruction/evaluation scripts.  The verification target downloads the pinned CCADB input, backs up the pre-existing corpus, rebuilds a normalized WebPKI graph, reconstructs per-view path rows, evaluates communication/state/lifecycle metrics, checks outputs, and freezes a manifest.  The verification log reports 18 property tests and a full \texttt{make verify} pass.

The production compiler and SRWS codec are applied to the regenerated path/view trace.  Dependency counts are typed compiler outputs.  The compiler emits
\[
  m(d)=2+2d
\]
with root-profile, root-authorization, selected-edge, and selected-context status records.

\begin{table}[t]
  \caption{Pinned CCADB reconstruction and evaluation. Numbers are regenerated from raw path/view rows emitted by the typed compiler and state-local warm-selector codec.}
  \label{tab:eval-summary}
  \centering
  \small
  \begin{tabular}{@{}lr@{}}
    \toprule
    Metric & Result \\
    \midrule
    Pinned CCADB release & \texttt{v1.20260828.235636} \\
    Graph nodes / edges & 10,234 / 24,653 \\
    Selected path/view contexts & 44,912 \\
    Distinct physical CA lineages & 16,858 \\
    Typed $m$ median / p95 / max & 6 / 16 / 18 \\
    LR+ warm median / p95 / max & 296 / 776 / 872 B \\
    SPB warm median / p95 / max & 3,842 / 5,932 / 6,350 B \\
    Typed dependency records & 72,020 \\
    Retained state / signed checkpoint & 16,150,580 / 16,160,724 B \\
    Verification gate & 18 tests; \texttt{make verify} pass \\
    \bottomrule
  \end{tabular}
\end{table}

\subsection{Warm communication and baselines}

The state-local warm wire codec has an 8-byte type/version/count header and one 48-byte reference per selected typed dependency:
\[
  B_{\rm LR+,warm}(m)=8+48m.
\]
The global state identifier, terminal key, and terminal check value are not transmitted in the warm origin message.  The verifier derives the terminal key and check value from the selected CA-context lineage and the transmitted reference vector as described in Section~\ref{sec:warm-wire}.  SPB is a strong concrete stateless baseline: it carries the same typed dependency vector in one self-contained server-carried object and authenticates it with a single ML-DSA-44 signature.  Using 209-byte dependency records, a fixed 168-byte bundle header, and a 2,420-byte ML-DSA-44 signature gives
\[
  B_{\rm SPB}(m)=2,588+209m.
\]
Profile~S carries one signed state certificate plus selected records and membership proofs.  The independent per-dependency PQ baseline signs each typed dependency separately and includes the 209-byte dependency statement with each signature:
\[
  B_{\rm perdepPQ}(m)=m(2,420+209).
\]

\begin{center}
\small
\begin{tabular}{@{}lrrrr@{}}
\toprule
Corpus point & LR+ & SPB & Profile~S & Per-dep. PQ \\
\midrule
Median ($m=6$) & 296 B & 3,842 B & 8,993 B & 15,774 B \\
p95 ($m=16$) & 776 B & 5,932 B & 18,033 B & 42,064 B \\
Maximum ($m=18$) & 872 B & 6,350 B & 19,841 B & 47,322 B \\
\bottomrule
\end{tabular}
\end{center}

The 296-byte median follows from the SRWS codec and the typed median $m=6$.  Removing the 48-byte global warm state identifier makes room for root/profile dependencies, while the p95 is 776 bytes because the pinned reconstruction exposes deeper selected CA lineages.  The communication result is tied to the semantic dependency vector actually used by the declared profile.

\paragraph{Selector negotiation overhead.}
The numbers above are profile-hint mode, where the client supplies a coarse closure-profile identifier and the origin sends one selector.  Catalog mode is a compatibility mechanism for deployments in which the origin does not know the client's view in advance.  The artifact records catalog byte accounting as a representative sensitivity result; we do not use it to claim catalog dominance, because a deployable catalog should be evaluated over matched physical lineages for a given origin rather than arbitrary representative rows.  The paper communication claims therefore use the profile-hint mode and the per-view results below.

\subsection{Per-view reconstruction results}

\begin{table}[H]
\caption{Per-view reconstruction results from the pinned CCADB reconstruction. Checkpoint sizes are shown in MB; exact byte counts are in the artifact outputs.}
\label{tab:per-view}
\centering
\scriptsize
\begin{tabular}{@{}lrrrrr@{}}
\toprule
View & Ctx. & $m_{50}/m_{95}/m_{\max}$ & LR+ med./p95 & Ckpt. MB & Cross. \\
\midrule
Apple & 13,795 & 6/18/18 & 296/872 B & 4.798 & 47.02 \\
Chrome & 4,671 & 6/10/14 & 296/488 B & 1.881 & 19.60 \\
Microsoft & 16,710 & 8/16/18 & 392/776 B & 5.646 & 50.41 \\
Mozilla & 9,736 & 6/10/14 & 296/488 B & 3.836 & 37.98 \\
\bottomrule
\end{tabular}
\end{table}

Table~\ref{tab:per-view} is generated from \texttt{path\_view\_trace.csv} and \texttt{per\_view\_dependency\_records.csv}.  The view-specific root evidence is materialized separately in \texttt{view\_root\_sets.csv}, which contains one row per view/root relation.  The all-view aggregate remains useful for global stress testing, but the per-view state and checkpoint values are the structural deployment estimates for a single configured browser/root-store profile.  They are not traffic-weighted browser telemetry.

\subsection{Profile sensitivity and semantic heterogeneity}
\label{sec:profile-sensitivity}

The evaluated core profile has a deliberately regular shape: $m(d)=2+2d$.  This makes the headline $m$-distribution a typed transformation of reconstructed CA-edge depth.  The point of the compiler is therefore not that the core experiment already breaks the depth relation, but that each counted item is a stream with a semantic type, authority scope, source contract, horizon, evidence digest, and correlation group.

To check that the interface is not restricted to depth-derived profiles, the artifact includes a small policy-augmented sensitivity experiment.  It adds one root/view policy stream $S_{\rm policy}(R,V)$ for active root/view pairs with at least 100 reconstructed contexts.  This threshold is not a browser-policy measurement and is not used in the headline comparison; it is a deterministic source-contract variant over the same reconstructed corpus.  It affects 82 of 744 active root/view pairs and 34,417 of 44,912 path/view contexts.  The augmented profile changes the global distribution from $6/16/18$ to $7/17/19$ and, more importantly, produces equal-depth contexts with different dependency counts.

\begin{table}[H]
\caption{Semantic heterogeneity sensitivity.  The core profile is regular in depth; the policy-augmented profile adds a root/view policy stream for selected root/view scopes and therefore splits equal-depth contexts.}
\label{tab:heterogeneity}
\centering
\scriptsize
\begin{tabular}{@{}rrrr@{}}
\toprule
Depth & Core $m$ & Augmented $m$ values & Context counts \\
\midrule
0 & 2 & 2 / 3 & 662 / 82 \\
1 & 4 & 4 / 5 & 6,088 / 5,657 \\
2 & 6 & 6 / 7 & 2,711 / 10,559 \\
3 & 8 & 8 / 9 & 1,033 / 7,497 \\
4 & 10 & 10 / 11 & 102 / 2,949 \\
\bottomrule
\end{tabular}
\end{table}

Table~\ref{tab:heterogeneity} is intentionally scoped as sensitivity evidence, not a claim of complete browser-policy reconstruction.  Its role is to separate the typed compiler mechanism from the regularity of the core profile: richer declared profiles can add or merge dependencies without changing the LR+ warm-wire semantics.

\subsection{Dependency and state distribution}

The reconstructed path/view trace has CA-edge depth distribution
{\scriptsize
\[
\begin{array}{c|rrrrrrrrr}
d&0&1&2&3&4&5&6&7&8\\\hline
N&744&11{,}745&13{,}270&8{,}530&3{,}051&1{,}912&2{,}100&2{,}056&1{,}504
\end{array}
\]
}
Applying $m(d)=2+2d$ gives $m\in\{2,4,6,8,10,12,14,16,18\}$ with median 6, p95 16, and maximum 18.  The evaluated core state contains 744 root-profile records, 744 root-authorization records, 35,266 selected-edge records, and 35,266 selected-context status records, for 72,020 typed dependency records.  The profile/root equality is a canonicalization fact, not an unexplained artifact: the audit finds exactly 744 active $(R,V)$ keys, four view-only keys, and no unused profile records.  Each root-profile record $S_{\rm rprof}(R,V)$ is paired with the corresponding root-authorization stream $S_{\rm root}(R,V)$ for the same active root/view key.  The count also matches the 744 depth-zero root contexts because those rows are precisely the active selected-root/view contexts.  The status/edge equality is likewise expected because the evaluated status stream is scoped to the selected transition $(X_i,X_{i+1},V)$, while optional CA/view-global status deduplication is not used in the headline accounting.  The all-view retained-state serialization is 16,150,580 bytes, and the signed full checkpoint is 16,160,724 bytes.

\subsection{Update and processing boundary}

A signed delta changing $k$ typed records is modeled as
\[
B_{\Delta}(k)=2,584+264k.
\]
The one-record signed delta is 2,848 bytes.  Record-only updates change $R_e$ and the update-plane state identifier, but do not rewrite stable CA-context terminals.  Because the warm selector is state-local, unrelated record updates do not require origins to retain old global states.  They only change the client's installed state through the background update plane.

The artifact verifies the protocol and reconstruction evaluation pipeline; it does not include a full browser/TLS benchmark.  ML-DSA verification occurs when state is installed.  Warm processing consists of parsing the selector, deriving the selected terminal, checking the current compiler vector, resolving the selected records, and evaluating source horizons.

\paragraph{Evaluation boundary.}
The evaluation reports global and per-view typed-compiler communication, retained-state, checkpoint, and lifecycle claims over the pinned CCADB reconstruction.  It does not include live OCSP/CRLite measurements, browser path-building behavior, or TLS integration.  These limits are explicit rather than hidden in the byte tables.

\section{Lifecycle and Deployment}
\label{sec:lifecycle}

Warm-message bytes alone do not determine deployment cost.  LR+ shifts work into retained state and background updates, while SPB keeps clients stateless and carries a larger object on every warm authentication.  The daily online model charges one signed update batch for the day:
\[
C_{\rm LR+}(N,k,H)=N B_{\rm LR,warm}+B_{\Delta}(k)+\frac{B_{\rm ckpt}}{H},
\]
where $N$ is warm authentications/day, $k$ is changed typed records/day, and $H$ is checkpoint amortization in days.  This is also the cost of a one-day range delta, so the daily crossover is unchanged by adding range catch-up support.  SPB costs
\[
C_{\rm SPB}(N)=N B_{\rm SPB,warm}.
\]
Using each view's median LR+ and SPB warm sizes, its own checkpoint size, $H=30$, and $k=16$, LR+ crosses SPB at 19.60 warm authentications/day for Chrome, 37.98 for Mozilla, 47.02 for Apple, and 50.41 for Microsoft.

\begin{table}[H]
\caption{Lifecycle model by view with 30-day checkpoint amortization and 16 changed typed records/day.}
\label{tab:lifecycle}
\centering
\small
\begin{tabular}{@{}lrrr@{}}
\toprule
View & LR+ med. & SPB med. & Crossover/day \\
\midrule
Apple & 296 B & 3,842 B & 47.02 \\
Chrome & 296 B & 3,842 B & 19.60 \\
Microsoft & 392 B & 4,260 B & 50.41 \\
Mozilla & 296 B & 3,842 B & 37.98 \\
\bottomrule
\end{tabular}
\end{table}

\paragraph{Sensitivity.}
Because $k=16$ and $H=30$ are operating points rather than measurements of all deployments, the artifact also evaluates a crossover surface for $k\in\{1,4,16,64,256\}$ and $H\in\{7,30,90\}$.  Table~\ref{tab:lifecycle-sens} gives representative values.  Larger checkpoint amortization horizons lower the crossover, while high update rates increase it.  The retained-state conclusion is therefore strongest for high-reuse clients and longer amortization windows, and weakest for fresh or rarely used clients.

\begin{table}[H]
\caption{Lifecycle crossover sensitivity in warm authentications/day.  Entries use median per-view LR+ and SPB bytes.}
\label{tab:lifecycle-sens}
\centering
\scriptsize
\begin{tabular}{@{}llrrrrr@{}}
\toprule
View & $H$ & $k=1$ & $k=4$ & $k=16$ & $k=64$ & $k=256$ \\
\midrule
Chrome & 7 & 76.58 & 76.80 & 77.70 & 81.27 & 95.56 \\
Chrome & 30 & 18.48 & 18.71 & 19.60 & 23.17 & 37.47 \\
Chrome & 90 & 6.70 & 6.92 & 7.81 & 11.39 & 25.68 \\
Microsoft & 7 & 209.25 & 209.46 & 210.27 & 213.55 & 226.66 \\
Microsoft & 30 & 49.39 & 49.59 & 50.41 & 53.69 & 66.79 \\
Microsoft & 90 & 16.95 & 17.16 & 17.98 & 21.25 & 34.36 \\
\bottomrule
\end{tabular}
\end{table}

The conclusion is not that LR+ universally dominates SPB.  LR+ targets high-reuse relying parties such as browsers and enterprise clients that amortize retained state across many authentications.  Very low-use fresh clients may reasonably prefer SPB.  The important locality result is different: unrelated updates still cost background bytes, but they do not force origins to support the client's exact old global state identifier and do not block selected warm authentication.

Offline clients either catch up by consecutive deltas, use one signed range catch-up delta, or reinstall a checkpoint.  With 16 changed records/day, one consecutive daily delta is $B_\Delta(16)=6{,}808$ bytes, so seven retained daily transitions cost $47{,}656$ bytes.  A signed range catch-up delta from the client's installed endpoint to the seven-day endpoint carries the 112 changed record payload once and costs $B_\Delta(112)=32{,}152$ bytes; this is the value used in the stale-client scenario.  If the client is outside the delta-retention window, it reinstalls the applicable view checkpoint from Table~\ref{tab:per-view}.  In all cases warm acceptance remains bounded by selected dependency horizons; a stale client beyond the relevant horizon fails closed until state is refreshed.

\section{Related Work and Scope}
\label{sec:related}

\paragraph{Certificate and signature compression.}
TLS certificate compression reduces transport size by applying generic compression to certificate chains~\cite{rfc8879}.  MTCs and MTL-style amortization move toward authenticated state and aggregate or landmark-relative certificate representations~\cite{mtc05,mtl09}.  MTC is richer than an issuance-proof compression format: the current draft includes relying-party maintained issuance-log state such as trusted subtrees and revoked serial ranges.  LR+ is complementary rather than a replacement.  It asks which mutable selected-authorization facts must be authenticated when a compressed representation is reused in a selected context, including root/profile state, selected-parent authorization, status or distrust, policy generation, and bounded grants.

\paragraph{Validation policy and root-store semantics.}
Hammurabi separates X.509 validation policy from mechanism with a pluggable logic-based framework~\cite{hammurabi}.  No Root Store Left Behind proposes General Certificate Constraints for fine-grained root trust and partial distrust across root stores~\cite{noroot}.  These works are closely aligned with LR+'s declared-profile boundary: they make policy semantics or root trust more explicit, while LR+ asks which mutable policy and authority facts must be authenticated and retained so that compressed authentication preserves the selected decision over time.

\paragraph{Revocation and WebPKI state.}
OCSP and CRLs define status mechanisms with their own semantics and caching rules~\cite{rfc6960,rfc5019,rfc5280}.  CRLite pushes compressed revocation information to browsers and illustrates the need for explicit coverage statements~\cite{crlite17,crlite26}.  RITM, PKISN, and broader delegation/revocation studies show that revocation and root-program state are ecosystem-level problems rather than simple signature checks~\cite{ritm,pkisn,sokdr}.  LR+ does not replace those sources; it gives a way to bind their effective values and horizons to selected CA contexts.

\paragraph{Interface separation.}
Context-separable interfaces study when cryptographic interfaces remain safe under key reuse across contexts~\cite{contextsep}.  LR+ uses a different notion of context: not secret-key interface separation, but mutable selected-authorization state surrounding an authentic certificate object.  The conceptual overlap is that both lines of work make an implicit context boundary explicit before proving security.

\paragraph{Authenticated data structures and transparency.}
Merkle trees, authenticated dictionaries, and dynamic authenticated indexes provide membership, freshness, and completeness tools~\cite{li06authdb}.  Certificate Transparency authenticates append-only log views~\cite{rfc9162}.  LR+ uses standard authenticated-state ideas but focuses on a different question: whether the authenticated projection is sufficient for the selected authorization semantics under independently mutable authority context.

\paragraph{Cross-signing and PQ migration.}
Cross-signing creates multiple admissible lineages for the same subject/key identity, and public measurements show it is a common WebPKI practice~\cite{crosssign20}.  Post-quantum migration increases the value of amortizing authentication, as also seen in adjacent PKI settings such as pqRPKI~\cite{pqrpki}.  The novelty of LR+ is not caching or Merkle authentication alone; it is the context-closure condition and its realization with path-local horizons, CA-context terminals, and state-local warm selectors.

\section{Discussion and Limitations}
\label{sec:discussion}

LR+ is a semantic compression layer, not a complete browser implementation.  The artifact measures retained-state serialization and a lifecycle model; it does not include full TLS integration, live browser telemetry, or live OCSP/CRLite acquisition.  Source contracts are explicit assumptions or deployment obligations.  The update publisher is assumed pinned and policy-consistent; equivocation detection can be added through transparency or witnessing but is not part of the core theorem.

The evaluation uses a pinned CCADB reconstruction rather than an aggregate-only evidence.  It reports global typed-compiler values, per-view quantiles, per-view state/checkpoint sizes, and per-view lifecycle crossovers.  The remaining engineering work before deployment is to instantiate source adapters, integrate with browser update channels, and decide which leaf-local status semantics remain in the baseline.

\section{Conclusion}

Post-quantum WebPKI compression must preserve more than object authenticity.  A relying party accepts a selected authorization context, and that context depends on mutable authority state.  LR+ makes this dependency explicit: F/C/L witnesses close event ambiguity, typed CA-context dependencies define what is retained, and state-local warm selectors let repeated authentications reuse authenticated state without global warm-state synchronization.  The pinned CCADB reconstruction shows that the stronger semantics still give a large warm-bandwidth advantage over a one-signature stateless baseline for high-reuse relying parties, while exposing the retained-state and update-plane costs needed to obtain that advantage.

\appendix
\section{Open Science and Artifact}
\label{app:open-science}

The implementation and evaluation source is accessible on GitHub at \url{https://github.com/nserser/LR-WebPKI}.  The README provided in that repository describes all steps required to reproduce the results reported in this paper.

\section{Additional Event-Coverage Details}
\label{app:event-coverage}

The proof of Theorem~\ref{thm:fcl} relies on two restrictions that are easy to miss.  First, dependency completeness is profile-relative: the compiler is responsible for emitting every mutable effective value consumed by the declared validation profile.  Second, event toggles must be globally realizable, not merely locally imaginable.  The sufficient condition in Lemma~\ref{lem:toggle} is therefore useful in practice: correlated facts are grouped into composite records and independently declared streams must satisfy a conditional rectangularity condition after the baseline, pre-state, and other histories are fixed.

The F/C/L terminology should be read semantically.  F determines the effective branch, C excludes a hidden branch by complete coverage, and L neutralizes an event by making policy insensitive to it before expiry.  A protocol field that looks like a freshness timestamp is not automatically an F witness; a filter digest is not automatically a C witness; and an expiry time is not automatically an L witness if early revocation is possible.

\section{LR+ Algorithms}
\label{app:lrplus-algorithms}

This appendix records the concrete state-machine checks used by the reference artifact.  The notation is intentionally close to the implementation; the main text gives the security abstraction.

\paragraph{Typed compilation.}
Given $\bar P=(X_0,\ldots,R,V)$, the compiler emits root-profile and root-authorization records, plus selected-edge and selected-context status records for each CA edge.  Each emitted record carries a semantic type, stable stream identifier, context identifier, generation, effective value digest, source-contract mode, horizon, evidence digest, and inclusion reason.  The compiler orders records canonically as root-profile, root-authorization, then edge/context-status pairs from the selected issuer toward the root.  The ordered list is the dependency vector hashed into the terminal.

\paragraph{Record identity.}
A concrete certificate identifier is $H(\textsf{CERT-ID},\operatorname{DER}(X))$, so cross-signed certificates with the same Subject-SPKI remain distinct.  A path identifier is $H(\textsf{PATH-ID},\operatorname{enc}(V,\mathsf{cid}(X_0),\ldots,\mathsf{cid}(R)))$.  A stream identifier is derived from the semantic type and stable authority context, not from the current epoch, and a stable reference is $H(\textsf{REF},\operatorname{enc}(\mathsf{strid}(s)))$.  Stable references remain usable across unrelated record updates.  A generation changes when the effective value for that stream changes.  The evidence digest commits to the source witness or profile-specific evidence object used by the publisher.

\paragraph{Consecutive delta installation.}
A consecutive delta is accepted only if all of the following hold: the configuration and update key are pinned; the predecessor state identifier equals the installed state; the epoch is consecutive; the signature verifies; the mutable and topology deltas are canonical; applying the deltas to local maps recomputes the signed successor roots; and the successor state identifier equals the signed value.  If there is no topology delta, the topology version and root remain unchanged.  If the topology root changes, the topology version must increase.

\paragraph{Range catch-up delta installation.}
A range delta is domain-separated from consecutive deltas.  It names an exact installed predecessor $(e,v,\mathsf{sid}_{e,v})$ and a later endpoint $(e',v',\mathsf{sid}_{e',v'})$, carries final values for the records changed across the range, and is signed by the pinned publisher.  The verifier rejects unless $e'>e$, the predecessor equals the installed state, the signature verifies, every patched record generation is greater than its installed generation, applying the patches recomputes the signed successor mutable root and state identifier, and topology root/version monotonicity holds.  Records not carried in the range payload are interpreted as unchanged from the installed predecessor.  Accepted range installation is therefore endpoint-equivalent to installing the corresponding valid consecutive deltas, but uses one publisher signature and one fixed envelope.

\paragraph{Checkpoint installation.}
A checkpoint carries the complete mutable map and terminal topology, their roots, the epoch and topology version, and a signed digest of the snapshot.  A client accepts it only if the signature verifies, the snapshot recomputes the signed roots, the epoch is newer than the installed epoch, and topology monotonicity is respected.  Checkpoints are not freshness guarantees for selected paths; after installation, warm acceptance is still bounded by the selected dependency horizons.

\paragraph{Warm verification.}
A warm wire selector contains no global $\mathsf{sid}$, terminal key, terminal check value, or authority value.  It contains only the 8-byte header $\tau=(\mathsf{type}:1,\mathsf{version}:1,\mathsf{flags}:2,m:4)$ and a canonical vector of 48-byte stable references.  The parser rejects unsupported types or versions, nonzero reserved flags, out-of-range counts, truncated encodings, and trailing bytes.  The verifier validates baseline object evidence, obtains the exact selected CA-context lineage, derives $\mathsf{pid}_{\bar P}$, $d_{\rm wire}$, and $c_{\rm wire}$ from that lineage and reference vector, and checks that installed $T_v[\mathsf{pid}_{\bar P}]$ equals $(m,d_{\rm wire},c_{\rm wire})$.  The terminal invariant makes this equality a check against the current compiler output, not merely against an old authenticated terminal.  The verifier then resolves each reference in installed $M_e$, checks source-contract mode and horizon, and evaluates the typed effective values under the declared profile.  The selector is only a pointer into installed authenticated state; it is not a self-contained proof of authority freshness.

\paragraph{View/path negotiation.}
In profile-hint mode the verifier or relying-party stack offers a coarse closure-profile identifier.  The origin maps that identifier and its available certification paths to one candidate warm selector.  In catalog mode, the origin sends a bounded list of pairs \((\mathsf{profid}_j,\mathsf{WireWarm}_j)\).  The client selects exactly one matching profile identifier before warm verification.  A selector for the wrong trust view, wrong selected parent chain, or unsupported profile fails because the locally derived \(\mathsf{pid}_{\bar P}\) does not match the current topology entry.  The profile identifier never contains the client's installed epoch, state root, topology root, or global \(\mathsf{sid}\).

\paragraph{Failure cases.}
A wrong trust view changes the CA-context lineage and fails terminal lookup.  A reordered, duplicate, old, or wrong-count dependency vector fails the current-terminal check unless it equals the current compiler output.  A range delta from the wrong predecessor, with a missing changed record, non-monotone generation, or mismatching endpoint root fails update-plane installation.  A selected-record reduction changes the effective value and fails policy or horizon checks.  An unrelated record update may change $R_e$ and $\mathsf{sid}_{e,v}$, but the same warm selector remains valid when interpreted against the new installed state, provided its selected records still satisfy policy and horizon checks.

\section{Adaptive Security Game}
\label{app:security-game}

The adversary receives public parameters and can request honest state transitions, checkpoints, and warm selectors for supported CA contexts.  It may interleave these queries adaptively and choose the final accepted transcript after seeing previous updates.  It wins if a verifier accepts a warm transcript for a full validation instance whose ideal selected semantics reject.

\paragraph{Bad events.}
The game exposes five modeled failure classes.  $\mathsf{SigBad}$ is a valid update or checkpoint not signed by the pinned publisher.  $\mathsf{HashBad}$ is a collision or ambiguous canonical encoding that makes two states, records, stream identifiers, path identifiers, ordered-reference vectors, or terminal statements share the same digest.  $\mathsf{BaseBad}$ is an ordinary baseline-validation failure for the selected end-entity evidence.  $\mathsf{ClockBad}$ is a violation of the trusted-time rule.  $\mathsf{SrcBad}_i$ is failure of the source contract for a selected dependency $S_i$.  The selected-source event is adaptive: after the adversary chooses its first accepted winning transcript, the game charges $\exists S_i\in\Dep(P^\star):\mathsf{SrcBad}_i$ for that selected context $P^\star$.

\paragraph{Hybrid sequence.}
Game $G_0$ is the real experiment.  Game $G_1$ aborts on update/checkpoint statements that verify without being honestly produced.  A difference between $G_0$ and $G_1$ yields a PQ-EUF-CMA adversary against ML-DSA.  Game $G_2$ aborts on hash or canonical-encoding collisions in state roots, terminal records, path identifiers, stream identifiers, ordered references, or evidence commitments.  A difference yields a quantum collision-resistance adversary.  Game $G_3$ aborts on baseline or clock failure.  In the remaining game the installed state is an honestly authenticated publisher state.

\paragraph{State-local warm argument.}
Because the warm wire selector has no global state identifier, the proof does not need to show that the origin and client named the same epoch.  It shows instead that the verifier interprets every selector against its own installed authenticated state.  Terminal fields are not adversarial wire fields: $\mathsf{pid}_{\bar P}$, $d_{\rm wire}$, and $c_{\rm wire}$ are derived from the selected CA-context lineage and transmitted reference vector before lookup.  The topology is keyed by $\mathsf{pid}_{\bar P}$ and contains exactly the current compiler vector.  Therefore accepted warm verification implies that the wire references equal $\mathsf{Refs}_v(\bar P)$ except under $\mathsf{HashBad}$.  Each current reference resolves to an authenticated typed record in $M_e$.  If none of the selected source contracts fails and time is within the selected horizon, the effective values equal the ideal values used by the declared profile.

\paragraph{Conclusion.}
Full-path lifting binds the baseline end-entity evidence to the exact selected CA-context lineage.  Compiler completeness fixes all mutable CA-context decisions.  Therefore the accepted transcript has the same selected semantics as the ideal profile, contradicting the adversary's win condition.  The remaining probability is bounded by the signature, hash, baseline, clock, and adaptive selected-source term in Theorem~\ref{thm:security}.  Corollary~\ref{cor:adaptive-src} gives the simpler path-local summation only when the source contracts themselves hold under conditioning on the prior adversarial transcript and on final selected-dependency membership.

\end{document}